\documentclass[letterpaper, 10 pt, conference]{ieeeconf} 
\IEEEoverridecommandlockouts 

\IEEEoverridecommandlockouts                              

\usepackage{ifpdf}
\usepackage{amsmath,amssymb,amsfonts}
\usepackage{url}

\usepackage{xcolor}
\usepackage{epsfig} 
\usepackage{epstopdf}
\usepackage{cite} 
\usepackage{url}
\usepackage{hyperref}
\hypersetup{
	colorlinks=true,
	linkcolor=blue,
	filecolor=blue,
	urlcolor=black,
	citecolor = blue,
}
\usepackage{easyReview} 
\allowdisplaybreaks

\newtheorem{thm}{Theorem}
\newtheorem{lem}{Lemma}
\newtheorem{prop}{Proposition}

\newtheorem{assum}{Assumption}

\newtheorem{rem}{Remark}

\newtheorem{prob}{Problem}

\newcommand{\sth}{\textit{s.t.}}
\newcommand{\T}{^{\top}} 
\newcommand{\ie}{\textit{i.e.}}

\newcommand{\diag}{\operatorname{diag}}

\newcommand{\SO}{\operatorname{SO}}

\newcommand{\tr}{\operatorname{tr}}

\newcommand{\MPC}{\mathrm{MPC}}
\newcommand{\bu}{\mathbf{u}}

\title{\LARGE \bf MPC-Based Trajectory Tracking for a Quadrotor UAV with Uniform Semi-Global Asymptotic Stability Guarantees}
\author{Qian Yang, Miaomiao Wang, Abdelhamid Tayebi
\thanks{This work was supported in part by the National Natural Science Foundation of China under Grant 62403205, and in part by the National Sciences and Engineering Research Council of Canada under Grant NSERC-DG RGPIN-2020-06270. Q. Yang and M. Wang are with the School of Artificial Intelligence and Automation, Huazhong University of Science and Technology, Wuhan 430074, China (e-mail: {\tt\small mmwang@hust.edu.cn}). A. Tayebi is with the Department of Electrical Engineering, Lakehead University, Thunder Bay, ON P7B 5E1, Canada (e-mail: {\tt\small atayebi@lakeheadu.ca}). } }

\begin{document}

\maketitle
\thispagestyle{empty}
\pagestyle{empty}
\begin{abstract}
   This paper proposes a model predictive trajectory tracking approach for quadrotors subject to input constraints. Our proposed approach relies on a hierarchical control strategy with an outer-loop feedback generating the required thrust and desired attitude and an inner-loop feedback regulating the actual attitude to the desired one. For the outer-loop translational dynamics, the generation of the virtual control input is formulated as a constrained model predictive control problem with time-varying input constraints and a control strategy, endowed with uniform global asymptotic stability guarantees, is proposed. For the inner-loop rotational dynamics, a hybrid geometric controller is adopted achieving semi-global exponential tracking of the desired attitude. Finally, we prove that the overall cascaded system is semi-globally asymptotically stable. Simulation results illustrate the effectiveness of the proposed approach. 
\end{abstract}
\section{Introduction}
Quadrotors are versatile unmanned aerial vehicles (UAVs) used in many industrial and hobby applications, due to their high maneuverability and vertical takeoff and landing capability. Due to the underactuated nature of these systems, the hierarchical inner-outer loop control paradigm is a commonly adopted solution for the trajectory tracking problem for quadrotor UAVs. The outer-loop feedback deals with the translational dynamics where a virtual control input is designed to regulate the quadrotor's position to the desired one. The thrust input and the desired orientation are then extracted from this virtual input and the inner loop takes care of regulating the actual orientation to the desired one (see, for instance, \cite{Abdess_Tay_2010,Roberts_Tayebi_2011,Roberts_Tayebi_2013}). Various attitude control techniques have been developed in the literature using different attitude representations such as the unit-quaternion \cite{tayebi2006attitude,Tayebi_2008} and the special orthogonal group $\SO(3)$ \cite{lee2011geometric}. Due to the topological obstruction on $\SO(3)$, continuous time-invariant feedback cannot achieve global asymptotic stability (GAS) on $\SO(3)$ \cite{bhat2000topological}. To overcome this limitation, hybrid feedback control techniques, endowed with global asymptotic stability guarantees, have been proposed in the literature \cite{mayhew2013synergistic,Ber_Abdes_Tay_2018,Wang_Tayebi2022}. 

In practical applications, the trajectory tracking of quadrotor UAVs must take into consideration practical requirements such as actuator saturations. Existing methods usually guarantee these requirements by restricting the outer-loop control parameters. Model predictive control (MPC) is an attractive approach when it comes to handling input constraints. In \cite{andrien2024model}, MPC was employed for the outer-loop feedback design for a quadrotor UAV. The translational dynamics was dynamically extended to generate a twice-differentiable desired attitude to be used by the inner-loop controller. To ensure that the actual virtual input satisfies the prescribed time-varying constraint, a uniform conservative constraint was imposed on the augmented virtual input in the MPC formulation. To relax the conservative constraints conditions in \cite{andrien2024model}, the authors in \cite{izadi2024high} suggested to update the time-varying constraint at each sampling instant. Both approaches in \cite{andrien2024model,izadi2024high} achieve uniform almost global asymptotic stability (UAGAS) of the overall closed-loop system. However, neither approach allows the original time-varying constraint to be imposed directly on the augmented virtual input.

In this paper, we propose a hierarchical control framework for quadrotor trajectory tracking. In the outer-loop, an MPC strategy is designed to achieve UGAS in the continuous-time sense. In particular, based on \cite{andrien2024model}, a scale factor is introduced so that the original time-varying constraints can be applied directly to the augmented virtual input in the MPC formulation. For the inner-loop, a hybrid attitude controller is adopted to ensure semi-global exponential stability (SGES). Uniform semi-global asymptotic stability (USGAS) of the overall cascaded closed-loop system is rigorously established. 

\section{Preliminaries}
\subsection{Notation and Definitions}
Denote the sets of all real numbers and non-negative integers by $\mathbb{R}$ and $\mathbb{N}$, respectively. We denote by $\mathbb{R}^n$ the $n$-dimensional Euclidean space, and denote by $I_n$ and $0_{m\times n}$ the $n$-by-$n$ identity matrix and the $m$-by-$n$ matrix with all zero elements, respectively. Let $e_1,e_2,e_3\in\mathbb{R}^3$ denote the standard basis vectors. Define the 2-norm of vector $a\in\mathbb{R}^n$ as $\|a\|$, and the induced 2-norm of matrix $A\in\mathbb{R}^{m\times n}$ as $\|A\|_2$. For some symmetric matrix $P=P\T\succ 0$, we define $\|x\|_P^2=x\T P x$ for a vector $x\in\mathbb{R}^n$. Moreover, let $\vert x\vert _\mathcal{A}$ denote the distance of a point $x\in\mathcal{M}$ to a closed set $\mathcal{A}\subset\mathcal{M}$.  We write $\lambda_{\max}(P)$ ($\lambda_{\min}(P)$) for the maximum (minimum) eigenvalue of $P$. For any vector $p\in \mathbb{R}^n$, we denote by $\underline{p}^i$ the $i$th component of $p$. For any vector $x=[x_1,x_2,x_3]\T\in\mathbb{R}^3$, we define $(\cdot )^\times $ as a skew-symmetric operator given by $x^\times=[0,-x_3,x_2;x_3,0,-x_1;-x_2,x_1,0]$, such that for any $x,y\in\mathbb{R}^3$, the identity $x^\times y=x\times y$ holds, where $\times$ denotes the vector cross product operator, and $\mathrm{vec}(\cdot)$ as the inverse operator of the map $(\cdot)^\times$, such that $\mathrm{vec}(x^\times)=x$. In addition, $\overline{\mathrm{vec}}(\cdot):\mathbb{R}^{m\times n}\to\mathbb{R}^{mn}$ denotes the column-stacking operator, which maps a matrix to a vector by concatenating its columns in order. Let $\SO(3)$ denote the special orthogonal group of order three, \ie, $SO (3)=\{R\in\mathbb{R}^{3\times 3}\mid RR\T=R\T R=I_3,\det(R)=1\}$.

We define the saturation function $\sigma_\Delta:\mathbb{R}\to\mathbb{R}$, for a given $\Delta>0$,  such that
\begin{enumerate}
    \item $\vert \sigma_\Delta(x)\vert \leq \Delta$ $\forall x\in\mathbb{R}$;
    \item $x\sigma_\Delta(x)>0$ $\forall x\neq 0,\sigma_\Delta(0)=0$;
    \item $\sigma_\Delta$ is differentiable, and $0< \frac{d \sigma_\Delta(x)}{dx}\leq 1$;
    \item The derivative $\sigma_\Delta'(x)=\frac{d\sigma_\Delta(x)}{dx}$ is nonincreasing on $(0,+\infty)$ and nondecreasing on $(-\infty,0)$. 
\end{enumerate}

Denote $\Psi_\Delta(x)=\int_0^x \sigma_\Delta(s)ds$. 
The following lemma on the saturation function will be used throughout this paper.
\begin{lem}\label{lem_1}
    For all $a>0, b\in[0,1]$, and $x\in\mathbb{R}$, the following inequality holds:
    \begin{enumerate}
        \item $\sigma_\Delta(ax)f(x)\leq axf(x)$, where $f:\mathbb{R}\to\mathbb{R}$ satisfies $xf(x)>0,\forall x\neq 0$ and $f(0)=0$;
        \item $bx\sigma_\Delta(x)\leq x\sigma_\Delta(bx)$;
        \item $\Psi_\Delta(x)\leq x\sigma_\Delta(x)$.
    \end{enumerate}
\end{lem}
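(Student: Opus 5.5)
The three items all follow from the defining properties 1)--4) of $\sigma_\Delta$; the key intermediate fact is that $|\sigma_\Delta(y)|\le |y|$ with matching sign. By property 2), $\sigma_\Delta(0)=0$, and by the mean value theorem together with property 3), $\sigma_\Delta(y)=\sigma_\Delta'(\xi)\,y$ for some $\xi$ between $0$ and $y$, with $0<\sigma_\Delta'(\xi)\le 1$. Hence $0\le y\sigma_\Delta(y)\le y^2$ for every $y\in\mathbb{R}$, and $\sigma_\Delta(y)$ has the sign of $y$.

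\emph{Item 1).} If $x=0$, both sides vanish because $f(0)=0$. For $x\neq 0$, write $\sigma_\Delta(ax)=c\,ax$ with $c=\sigma_\Delta'(\xi)\in(0,1]$. Then $\sigma_\Delta(ax)f(x)=c\,a\,xf(x)\le a\,xf(x)$, since $a\,xf(x)>0$ and $c\le 1$.

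\emph{Item 2).} Again $x=0$ and $b=0$ are trivial. For $b\in(0,1]$ and $x>0$, divide by $x$, so the claim becomes $b\,\sigma_\Delta(x)\le\sigma_\Delta(bx)$. Consider $g(b)=\sigma_\Delta(bx)/b$ on $(0,1]$; we want $g(1)\le g(b)$. This is exactly the statement that the chord slope $\sigma_\Delta(y)/y$ is nonincreasing in $y>0$. To show that, write $\sigma_\Delta(y)/y=\int_0^1\sigma_\Delta'(sy)\,ds$, which is nonincreasing in $y$ by property 4), since $\sigma_\Delta'$ is nonincreasing on $(0,\infty)$. For $x<0$, the same argument uses that $\sigma_\Delta'$ is nondecreasing on $(-\infty,0)$: the ratio $\sigma_\Delta(y)/y$ is nondecreasing as $y$ increases toward $0^-$, i.e.\ nonincreasing in $|y|$. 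Since $bx$ is closer to zero than $x$, this gives $\sigma_\Delta(bx)/(bx)\ge\sigma_\Delta(x)/x$. Multiplying through by $b x^2>0$ yields $x\sigma_\Delta(bx)\ge b\,x\sigma_\Delta(x)$. This monotonicity of the chord slope is the only nontrivial step, and the integral representation handles it directly.

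\emph{Item 3).} For $x>0$, $\sigma_\Delta$ is increasing by property 3), so $\sigma_\Delta(s)\le\sigma_\Delta(x)$ on $[0,x]$. Therefore $\Psi_\Delta(x)=\int_0^x\sigma_\Delta(s)\,ds\le x\sigma_\Delta(x)$. For $x<0$, write $\Psi_\Delta(x)=\int_x^0(-\sigma_\Delta(s))\,ds$. On $[x,0]$ we have $-\sigma_\Delta(s)\le-\sigma_\Delta(x)$ by monotonicity, so $\Psi_\Delta(x)\le |x|\,(-\sigma_\Delta(x))=x\sigma_\Delta(x)$. The case $x=0$ is equality.

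Item 3) could also be obtained from item 2) via $\Psi_\Delta(x)=\int_0^1 x\,\sigma_\Delta(bx)\,db$, but the monotonicity argument above is simpler.
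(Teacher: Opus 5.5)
Your proof is correct. Item 1) matches the paper: the paper says $ax-\sigma_\Delta(ax)$ has the sign of $x$, and you write $\sigma_\Delta(ax)=c\,ax$ with $c\in(0,1]$.

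\textbf{Item 2), a different route.} The paper differentiates $\rho_\Delta(x)=\sigma_\Delta(bx)-b\sigma_\Delta(x)$. By property 4), $\rho_\Delta'(x)=b(\sigma_\Delta'(bx)-\sigma_\Delta'(x))\ge 0$, and $\rho_\Delta(0)=0$ finishes it; this needs only the mean value theorem. Your argument shows the chord slope $\sigma_\Delta(y)/y=\int_0^1\sigma_\Delta'(sy)\,ds$ is nonincreasing in $|y|$. This makes the concavity/convexity of $\sigma_\Delta$ on each half-line explicit. However, it uses the fundamental theorem of calculus for $\sigma_\Delta'$, so add a clause saying why that is legitimate: $\sigma_\Delta'$ is bounded, so $\sigma_\Delta$ is Lipschitz, and $\sigma_\Delta'$ is monotone on each half-line.

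\textbf{Item 3), also different, and yours is the sound version.} The paper writes $\Psi_\Delta(x)=x\sigma_\Delta(\theta x)$ with $\theta\in(0,1)$ by the mean value theorem for integrals. It then claims $\Psi_\Delta(x)\le\theta x\sigma_\Delta(x)$ ``by item 1)''. That intermediate inequality actually goes the other way: item 2) gives $x\sigma_\Delta(\theta x)\ge\theta x\sigma_\Delta(x)$. For $\sigma_\Delta=\tanh$ and $x=1$, for instance, $\Psi_\Delta(1)\approx 0.434$ while $\theta\tanh(1)\approx 0.354$. The paper's conclusion is rescued only by $x\sigma_\Delta(\theta x)\le x\sigma_\Delta(x)$, which is monotonicity of $\sigma_\Delta$. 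That is exactly your direct bound, which is self-contained and avoids the misstep.

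\textbf{Fix your closing remark.} The claim that item 3) ``could also be obtained from item 2) via $\Psi_\Delta(x)=\int_0^1 x\sigma_\Delta(bx)\,db$'' fails for the same reason. Item 2) bounds this integrand from below, so it yields only $\Psi_\Delta(x)\ge\frac{1}{2}x\sigma_\Delta(x)$, not the upper bound. From that representation, the upper bound needs $x\sigma_\Delta(bx)\le x\sigma_\Delta(x)$, which again is monotonicity of $\sigma_\Delta$, not item 2). Drop that remark or correct it accordingly.
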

\begin{proof}
    See Appendix \ref{adx_lem1}
\end{proof}
\section{Problem Formulation}
Let $R\in \SO(3)$ denote the attitude of the body-fixed frame with respect to the inertial frame, and $\omega\in\mathbb{R}^3$ denote the body-fixed frame angular velocity. Let $p,v\in\mathbb{R}^3$ denote the position and linear velocity expressed in the inertial frame, respectively. The dynamics model of a UAV is given by
\begin{subequations}\label{eqn_model}
    \begin{align}
        \dot{p}&=v\\
        \dot{v}&=ge_3-TRe_3-Dv\\
        \dot{R}&=R\omega^\times\label{eqn_model_inner_R}\\
        J\dot{\omega}&=-\omega\times J\omega +\tau\label{eqn_model_inner_omega}
      \end{align}
\end{subequations}
where $g$ is the acceleration of gravity, $T$ denotes the thrust normalized by mass, $D=\diag(d_1,d_2,d_3),d_1,d_2,d_3>0$, are the mass-normalized rotor drag coefficients, $J=J\T\in\mathbb{R}^{3\times 3}$ is the constant inertia matrix, and $\tau \in\mathbb{R}^3$ is the control torque input. The thrust is considered to be nonnegative and limited according to $0\leq T(t)\leq T_{\max}$ for all $t\geq 0$, where $T_{\max}>g$ is the maximal thrust. This model is adapted from \cite{andrien2024model}. Compared with the model in \cite{andrien2024model}, \eqref{eqn_model_inner_omega} is written in a more standard form. This difference does not materially affect the subsequent controller design, since the additional terms in the referenced models can be absorbed into the feedforward term.

Consider a continuous time-varying reference trajectory $r(t)\in\mathbb{R}^3$. Given the model in \eqref{eqn_model}, the objective is to design $T$ and $\tau$ so that the system tracks a given continuous reference $r(t)$. To achieve the objective, we make the following mild assumption.
\begin{assum}\label{assum_1}
    Suppose $r(t)$ is four times differentiable, and its first four derivatives $\dot{r}(t),\ddot{r}(t),r^{(3)}(t)$, and $r^{(4)}$ are uniformly bounded with known bounds. In addition, there exist positive constants $\delta_r$ and $\delta_{rz}$ satisfying $\|D\dot{r}+\ddot{r}\|\leq \delta_r<T_{\max}-g$ and $e_3\T (D\dot{r}+\ddot{r})<\delta_{rz}<g$.
\end{assum}

Defining the error state as $\tilde{p}=p-r,\tilde{v}=v-\dot{r}$, the error dynamics are given by 
\begin{subequations}\label{eqn_error_outer}
    \begin{align}
        \dot{\tilde{p}}&=\tilde{v}\\
        \dot{\tilde{v}}&=ge_3-TRe_3-Dv-\ddot{r}\notag\\
        &=ge_3-TRe_3-\ddot{r}-D\dot{r}-D\tilde{v}:=\mu-D\tilde{v}
    \end{align}
\end{subequations}
with
\begin{align*}
\mu=ge_3-TRe_3-\ddot{r}-D\dot{r}.
\end{align*}
The variable $\mu$ is viewed as a virtual control input for the translational error dynamics. By appropriately designing $\mu$, both $\tilde{p}$ and $\tilde{v}$ can be regulated to zero. Define $\mu_d$ as the desired virtual input given by 
\begin{align}\label{eqn_mud}
\mu_d=ge_3-TR_de_3-\ddot{r}-D\dot{r}
\end{align}
where $R_d$ denotes the desired attitude to be tracked by tracked by the actual attitude $R$ via the control torque $\tau$. Equation \eqref{eqn_mud} determines only the third column of $R_d$, so the extraction of $R_d$ from $\mu_d$ is intrinsically nonunique. Furthermore, the extraction map is singular on a nonempty set. Hence, besides the physical constraints induced by $T_{\max},\dot{r}$, and $\ddot{r}$, the desired virtual input $\mu_d$ must satisfy an additional admissibility condition to guarantee a well-defined extraction of $R_d$. The corresponding bound is denoted by $\mathcal{B}(\dot{r},\ddot{r})$, whose explicit form will be given in a later section of the paper.

Moreover, the implementation of a geometric controller such as that in \cite{lee2010geometric} to drive $R$ toward $R_d$ requires information about the desired angular velocity $\omega_d$ and the desired angular acceleration $\dot{\omega}_d$. This implies that the desired virtual input $\mu_d$ must be twice differentiable.

Hence, the control design is decomposed into the following outer-loop and inner-loop problems.
\begin{prob}[Outer-loop problem]\label{prob_outer}
    Design a twice differentiable desired virtual input $\mu_d$ satisfying the constraint $\mathcal{B}(\dot r,\ddot r)$ such that the origin of system \eqref{eqn_error_outer} is UGAS.
\end{prob}

\begin{prob}[Inner-loop problem]\label{prob_inner}
    Based on the desired attitude $R_d$ extracted from $\mu_d$, design a control torque $\tau$ such that $R$ tracks $R_d$ and the overall closed-loop system converges to the desired set.
\end{prob}

\section{Control System Design}\label{sec_III}
In this section, the inner-outer loop control scheme for the quadrotor UAV is developed. First, the desired attitude extraction procedure is presented, from which the admissible constraints on the virtual input are derived. Second, an MPC formulation is developed that guarantees UGAS of the outer-loop subsystem under these constraints. Finally, a hybrid controller is introduced to guarantee the SGES of the inner-loop subsystem.

\subsection{Thrust, Desired Attitude and Input Constraints}
For convenience, we represent the desired attitude $R_d\in\SO(3)$ in terms of its column vectors as $R_d =[\mathbf{x}_d,\mathbf{y}_d,\mathbf{z}_d]$, where $\mathbf{x}_d,\mathbf{y}_d,\mathbf{z}_d \in\mathbb{R}^3$ constitute a right-handed orthonormal frame. From \eqref{eqn_mud}, the total thrust magnitude is given by
\begin{align}\label{eqn_thrust}
    T = \| ge_3 - D\dot{r}- \ddot{r}  -\mu_d\|
\end{align}
and the $z-$axis of desired attitude is determined by the normalized thrust direction
\begin{align}\label{eqn_zd}
    \mathbf{z}_d = R_d e_3 = \frac{1}{T}(ge_3 - D\dot{r}- \ddot{r}  -\mu_d).
\end{align}
To specify the vehicle heading, we introduce a twice differentiable desired yaw angle $\psi_d(t)$, which can be chosen to satisfy task-dependent orientation requirements.
\begin{align}\label{eqn_xb}
    \mathbf{x}_b = \begin{bmatrix}
        \cos\psi_d \\ \sin\psi_d \\ 0
    \end{bmatrix}.
\end{align}
Then, the first column of $R_d$ is given by
\begin{align}\label{eqn_xd}
    \mathbf{x}_d = \frac{\pi_z \mathbf{x}_b}{\|\pi_z\mathbf{x}_b\|}
\end{align}
with $\pi_z = I_3-\mathbf{z}_d\mathbf{z}_d\T$, denoting the orthogonal projection on the plane orthogonal to $\mathbf{z}_d$. To avoid the singularity that $\pi_z\mathbf{x}_b = 0$, \ie, $\mathbf{x}_b$ is parallel to $\mathbf{z}_d$, the following constraint is applied to $\mu_d$,
\begin{align*}
    e_3\T\mu_d<g-e_3\T D\dot{r}-e_3\T \ddot{r}
\end{align*}
which can be ensured by requiring 
\begin{align}\label{eqn_constraint_1}
    \|\mu_d\|\leq g-e_3\T D\dot{r}-e_3\T\ddot{r}-\epsilon :=\mathcal{B}_1(\dot{r},\ddot{r})
\end{align}
with $0<\epsilon<g-\delta_{rz}$. Therefore, the desired attitude can be given by
\begin{align}\label{eqn_desired_attitude}
    R_d = [\mathbf{x}_d,\mathbf{z}_d\times\mathbf{x}_d,\mathbf{z}_d].
\end{align}
\begin{lem}
     Under Assumption \ref{assum_1} and constraint \eqref{eqn_constraint_1}, the total thrust \eqref{eqn_thrust} is guaranteed to be strictly positive.
\end{lem}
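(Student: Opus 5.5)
Since $T=\|ge_3-D\dot r-\ddot r-\mu_d\|$ is a Euclidean norm, it is nonnegative, and it is zero only if the vector inside vanishes. It therefore suffices to bound a single scalar component of that vector away from zero, and the natural choice is the third component. By Cauchy--Schwarz, $\|v\|\ge |e_3\T v|\ge e_3\T v$ for any $v\in\mathbb{R}^3$, which gives
\begin{align*}
T\;\geq\; e_3\T\big(ge_3-D\dot r-\ddot r-\mu_d\big)\;=\;g-e_3\T D\dot r-e_3\T\ddot r-e_3\T\mu_d .
\end{align*}

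The next step is to control $e_3\T\mu_d$ through the constraint \eqref{eqn_constraint_1}. Since $e_3$ is a unit vector, $e_3\T\mu_d\le\|\mu_d\|\le g-e_3\T D\dot r-e_3\T\ddot r-\epsilon$. Substituting this into the previous display, the reference terms cancel and we obtain $T\ge\epsilon$.

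The remaining point, and the only place where Assumption \ref{assum_1} enters, is checking that the constraint is consistent and that $\epsilon$ is genuinely positive. Assumption \ref{assum_1} gives $e_3\T(D\dot r+\ddot r)<\delta_{rz}<g$, so $\mathcal{B}_1(\dot r,\ddot r)>g-\delta_{rz}-\epsilon>0$ whenever $0<\epsilon<g-\delta_{rz}$. Hence the admissible set for $\mu_d$ is nonempty (it contains a neighborhood of the origin), and the bound $T\ge\epsilon>0$ holds uniformly in $t$. I do not expect any real obstacle. The only subtlety is making sure the argument uses the signed component $e_3\T v$ rather than the absolute value, so that the cancellation of the reference terms goes through.
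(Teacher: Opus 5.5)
Your proof is correct and follows the paper's argument: you lower-bound $T$ by the third component of $ge_3-D\dot r-\ddot r-\mu_d$, then use $e_3\T\mu_d\le\|\mu_d\|\le\mathcal{B}_1(\dot r,\ddot r)$ to get $T\ge\epsilon>0$. Using the signed component $e_3\T v$ directly is slightly cleaner than the paper's chain, whose step $\vert a-e_3\T\mu_d\vert\ge\vert a\vert-e_3\T\mu_d$ (with $a=g-e_3\T(D\dot r+\ddot r)$) is valid only because Assumption~\ref{assum_1} gives $a>0$.
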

\begin{proof}
    Under Assumption \ref{assum_1}, the reference trajectory $r(t)$ satisfies $g-e_3\T(D\dot{r}+\ddot{r})>0$. Then, in view of \eqref{eqn_thrust} and \eqref{eqn_constraint_1}, one can obtain 
    \begin{align*}
        T&\geq \vert g - e_3\T D\dot{r}-e_3\T\ddot{r}-e_3\T \mu_d\vert\\& \geq \vert g-e_3\T D\dot{r}-e_3\T\ddot{r}\vert -e_3\T \mu_d\\
        &\geq g-e_3\T D\dot{r} - e_3\T\ddot{r} -\|\mu_d\|\\
        &\geq \epsilon >0.
    \end{align*}
    Consequently, the thrust $T$ remains strictly positive.
\end{proof}

Due to actuator saturations, the thrust \eqref{eqn_thrust} has to satisfy $T=\|ge_3-\ddot{r}-D\dot{r}-\mu_d\|\leq T_{\max}$. A sufficient conservative condition is 

\begin{align}\label{eqn_constraint_2}
    \|\mu_d\|\leq T_{\max}-\|ge_3-\ddot{r}-D\dot{r}\|:=\mathcal{B}_2(\dot{r},\ddot{r}).
\end{align}
Combining with \eqref{eqn_constraint_1} and \eqref{eqn_constraint_2}, the time-varying bound on the magnitude of $\mu_d$ is given by
\begin{align}\label{eqn_tv_constraint}
    \|\mu_d\|\leq \mathcal{B}(\dot{r},\ddot{r}):=\min\left\{\mathcal{B}_1(\dot{r},\ddot{r}),\mathcal{B}_2(\dot{r},\ddot{r})\right\}
\end{align}

Moreover, the explicit expressions for the desired angular velocity $\omega_d$ and angular acceleration $\dot{\omega}_d$ required for attitude tracking are provided in Appendix \ref{apx_omega_d}.

\subsection{Model Predictive Control}
To ensure that the virtual input \eqref{eqn_mud} satisfies the time-varying constraints \eqref{eqn_tv_constraint}, \cite{Abdess_Tay_2010,Roberts_Tayebi_2011,Roberts_Tayebi_2013,yang2025quaternion} impose additional restrictions on the parameters of the outer-loop PD controller, which in turn degrade the control performance. To avoid such additional restrictions, MPC is employed because of its capability for explicit constraint handling. In order to provide twice differentiable desired accelerations for the inner loop, inspired by the work \cite{andrien2024model}, we extend the system \eqref{eqn_error_outer} as follows,
\begin{subequations}\label{eqn_exCL}
    \begin{align}
        \dot{\tilde{p}}&=\tilde{v}\\
        \dot{\tilde{v}}&=-D\tilde{v}+\mu_d\\
        \dot{\mu}_d&=-\frac{1}{\gamma}(\mu_d-\alpha\eta)\\
        \dot{\eta}&=-\frac{1}{\gamma}(\eta-\alpha u)
    \end{align}
\end{subequations}
where $\gamma>0,\tilde{p},\tilde{v},\mu_d,\eta\in\mathbb{R}^3,u\in\mathbb{R}^3$ is the augmented virtual input and $\alpha$ is a scale factor. Compared with \cite{andrien2024model}, the scale factor $\alpha$ is introduced to allow the time-varying constraint of $\mu_d$ to be applied directly to the augmented virtual input, rather than on its lower bound. The selection of $\gamma$ and $\alpha$ will be discussed in Lemma \ref{lem_gamma_alpha}. 

Inspired by \cite{andrien2024model,mueller2013model}, the vector system \eqref{eqn_exCL} is decomposed into three independent scalar subsystems to facilitate the controller design:
\begin{subequations}\label{eqn_separate_system}
    \begin{align}
        \dot{\tilde{\underline{p}}}^i&=\tilde{\underline{v}}^i\\
        \dot{\tilde{\underline{v}}}^i&=-d_i\tilde{\underline{v}}^i+\underline{\mu_d}^i\\
        \dot{\underline{\mu}}_d^i&=-\frac{1}{\gamma}(\underline{\mu_d}^i-\alpha\underline{\eta}^i)\label{eqn_separate_system_3}\\
        \dot{\underline{\eta}}^i&=-\frac{1}{\gamma}(\underline{\eta}^i-\alpha \underline{u}^i)\label{eqn_separate_system_4}       
    \end{align}
\end{subequations}
with $i\in\{1,2,3\}$. To enforce $\eqref{eqn_tv_constraint}$, a more conservative constraint is imposed on each component of 
$\mu_d$ as
\begin{align}\label{eqn_constraint}
    \vert \underline{\mu_d}^i\vert \leq \underline{\mathcal{B}}(\dot{r},{\ddot{r}}):=\frac{1}{\sqrt{3}}\mathcal{B}(\dot{r},\ddot{r}).
\end{align}
Applying exact discretization under a zero-order hold (ZOH) on the input, \ie,
\begin{align*}
    \underline{u}^i(t)=\underline{u}^i(t_k),\quad t\in[t_k,t_{k+1})
\end{align*}
where $t_k=kh,k\in \mathbb{N}$, and $h$ is the sampling period, results in the discrete-time subsystem
\begin{align}\label{eqn_discrete_system}
    x_{k+1}^i=\underline{A}^ix_k^i+\underline{B}^i\underline{u}_k^i
\end{align}
where $x_k^i=x^i(t_k),x^i=[\tilde{\underline{p}}^i,\tilde{\underline{v}}^i,\underline{\mu_d}^i,\underline{\eta}^i]\T\in\mathbb{R}^4,k\in \mathbb{N}$, and the system matrices $\underline{A}^i$ and $\underline{B}^i$ are given by
\begin{align}\label{eqn_calculate_AB}
    \underline{A}^i=\exp{(\underline{A}_0^i h)},\underline{B}^i= \int_0^h \exp(\underline{A}_0^i \tau)\underline{B}_0^i d\tau
\end{align}
with
\begin{align}\label{eqn_A0B0}
    \underline{A}_0^i=\begin{bmatrix}
        0 & 1 & 0 & 0 \\ 0 & -d_i & 1 & 0 \\ 0 & 0 & -\frac{1}{\gamma} & \frac{\alpha}{\gamma} \\ 0 & 0 & 0 & -\frac{1}{\gamma}
    \end{bmatrix},\underline{B}_0^i=\begin{bmatrix}
        0 \\ 0 \\ 0 \\ \frac{\alpha}{\gamma}
    \end{bmatrix}.
\end{align}
Since the three scalar subsystems have the same structure, the index $i$ will be omitted from here. Denote $\Delta=\inf_{t\geq 0} \underline{\mathcal{B}}(\dot{r},{\ddot{r}})=\frac{1}{\sqrt{3}}\min\{T_{\max}-g-\delta_r,g-\delta_{rz}-\epsilon\}$. The time-varying bound $\underline{B}(\ddot{r})$ on $\underline{\mu_d}$ in each period can be bounded by a positive constant as
\begin{align}\label{eqn_dis_contraint}
    0<\Delta_k=\inf_{kh\leq t<(k+1)h}\underline{\mathcal{B}}(\dot{r},{\ddot{r}}).
\end{align}
Then, by applying the discretized constraints \eqref{eqn_dis_contraint} directly to the control input $\underline{u}_k$ within each sampling period, it can be ensured that, when $\vert \underline{\mu_d}(0)\vert\leq\Delta_0 $ and $\vert \underline{\eta}(0)\vert \leq \Delta_0$, the virtual input $\underline{\mu_d}$ satisfies the time-varying constraint \eqref{eqn_constraint}. This is guaranteed by the following two lemmas.
\begin{lem}\label{lem_3}
Under Assumption~\ref{assum_1}, define $L_1 :=\sup_{t\geq 0}\|\ddot{r}(t)\|$ and $L_2:=\sup_{t\geq 0}\|r^{(3)}(t)\|$. Then, the input constraint \eqref{eqn_constraint} is Lipschitz continuous with respect to time, \ie
\begin{align*}
        \vert \underline{\mathcal{B}}(\dot{r}(t_1),\ddot{r}(t_1))- \underline{\mathcal{B}}(\dot{r}(t_2),\ddot{r}(t_2))\vert\leq \bar{L}\vert t_1-t_2\vert,\forall t_1,t_2>0,    
\end{align*}
where $\bar{L} = (\bar{d}L_1 +L_2)/\sqrt{3}$, and $\bar{d} = \max\{d_1,d_2,d_3\}$.
\end{lem}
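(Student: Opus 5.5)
The plan is to treat $\underline{\mathcal{B}}$ as a composition of Lipschitz maps of the signal $w(t):=D\dot{r}(t)+\ddot{r}(t)$. First, I would bound the rate of change of $w$. By Assumption~\ref{assum_1}, $r$ is four times differentiable with bounded derivatives, so $w$ is differentiable with $\dot{w}=D\ddot{r}+r^{(3)}$. Since $\|D\|_2=\bar{d}$, this gives $\|\dot{w}(t)\|\leq \bar{d}L_1+L_2$ for all $t$. The mean value inequality for vector-valued functions then yields
\begin{align*}
\|w(t_1)-w(t_2)\|\leq (\bar{d}L_1+L_2)\vert t_1-t_2\vert .
\end{align*}

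Next, I would show that each of the two pieces of $\mathcal{B}$ is $1$-Lipschitz in $w$. For $\mathcal{B}_1=g-\epsilon-e_3\T w$, Cauchy--Schwarz gives $\vert \mathcal{B}_1(t_1)-\mathcal{B}_1(t_2)\vert=\vert e_3\T(w(t_1)-w(t_2))\vert\leq\|w(t_1)-w(t_2)\|$. For $\mathcal{B}_2=T_{\max}-\|ge_3-w\|$, the reverse triangle inequality gives $\vert\mathcal{B}_2(t_1)-\mathcal{B}_2(t_2)\vert\leq \|w(t_1)-w(t_2)\|$. Note that $\mathcal{B}_2$ is not differentiable where $ge_3=w$, which is why I use the reverse triangle inequality rather than a derivative bound there.

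Third, I would use the elementary fact that $\vert\min\{a_1,b_1\}-\min\{a_2,b_2\}\vert\leq\max\{\vert a_1-a_2\vert,\vert b_1-b_2\vert\}$. This follows by writing $\min\{a,b\}=\tfrac12(a+b-\vert a-b\vert)$, or more simply by a case check: if the minimum at $t_1$ is $a_1$, then $\min_1-\min_2\leq a_1-a_2$ when $\min_2=a_2$, and $\min_1-\min_2\leq b_1-b_2$ when $\min_2=b_2$; the other direction is symmetric. Hence $\mathcal{B}$ is $1$-Lipschitz in $w$. Dividing by $\sqrt{3}$ and combining with the first step gives the constant $\bar{L}=(\bar{d}L_1+L_2)/\sqrt{3}$.

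No step is a real obstacle. The only points needing care are the nonsmoothness of the norm and of the $\min$, which the Lipschitz arguments above handle without differentiating. I would also observe that only the bounds on $\ddot{r}$ and $r^{(3)}$ enter the constant; $\dot{r}$ itself does not appear.
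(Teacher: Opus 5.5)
Your proof is correct and follows the same route as the paper. The steps match: you bound the variation of $D\dot r+\ddot r$ by $(\bar d L_1+L_2)\vert t_1-t_2\vert$, show $\mathcal{B}_1$ is $1$-Lipschitz via Cauchy--Schwarz and $\mathcal{B}_2$ via the reverse triangle inequality, use non-expansiveness of the $\min$, and divide by $\sqrt{3}$. The only cosmetic difference is that you bundle $w=D\dot r+\ddot r$ and apply the mean value inequality to it directly, whereas the paper splits the difference into $\|D(\dot r(t_1)-\dot r(t_2))\|+\|\ddot r(t_1)-\ddot r(t_2)\|$; both give the same constant.
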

\begin{proof}
    See Appendix \ref{apx_lem3}.
\end{proof}
\begin{lem}\label{lem_gamma_alpha}
    Consider system \eqref{eqn_separate_system}, and let the parameter $\gamma$ and the scale factor $\alpha$ satisfy the following inequalities:
    \begin{align}\label{eqn_lem3}
        \left\{\begin{aligned}
            0&<\gamma<\frac{\Delta}{\bar{L}}\\
            0&<\alpha\leq\frac{\beta-e^{-h/\gamma}}{1-e^{-h/\gamma}}
        \end{aligned}\right.
    \end{align}
    with $\beta=\Delta/(\Delta+\bar{L}h)$. If $\vert\underline{\mu_d}(t_k)\vert\leq \Delta_k,\vert \underline{\eta}(t_k)\vert\leq \Delta_k$ and $\vert \underline{u}_k\vert\leq \Delta_k$, then
    \begin{enumerate}
        \item $\vert \underline{\mu_d}(t)\vert \leq \Delta _k$ and $\vert \underline{\eta}(t)\vert \leq \Delta_k$ for all $t\in[t_k,t_{k+1})$;
        \item $\vert \underline{\mu_d}(t_{k+1})\vert \leq \Delta_{k+1}$ and $\vert \underline{\eta}(t_{k+1})\vert \leq\Delta_{k+1}$.
    \end{enumerate}
\end{lem}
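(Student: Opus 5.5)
Within one sampling interval the input $\underline{u}_k$ is constant, so \eqref{eqn_separate_system_3}--\eqref{eqn_separate_system_4} is a linear cascade that can be solved in closed form. Set $s=t-t_k\in[0,h]$ and $\lambda=e^{-s/\gamma}$. Then
\begin{align*}
\underline{\eta}(t)=\lambda\,\underline{\eta}(t_k)+(1-\lambda)\,\alpha\underline{u}_k ,
\end{align*}
which is a convex combination of $\underline{\eta}(t_k)$ and $\alpha\underline{u}_k$. The second condition in \eqref{eqn_lem3} implies $\alpha\le 1$, since $\beta<1$ gives $(\beta-e^{-h/\gamma})/(1-e^{-h/\gamma})<1$. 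Hence $|\alpha\underline{u}_k|\le\Delta_k$, and so $|\underline{\eta}(t)|\le\Delta_k$ on the interval.

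For $\underline{\mu_d}$, variation of constants gives
\begin{align*}
\underline{\mu_d}(t)=\lambda\,\underline{\mu_d}(t_k)+\frac{\alpha}{\gamma}\int_{t_k}^{t}e^{-(t-\tau)/\gamma}\underline{\eta}(\tau)\,d\tau .
\end{align*}
The kernel weights $\lambda$ and $\int_0^s \gamma^{-1}e^{-(s-\tau)/\gamma}d\tau=1-\lambda$ sum to one. Using $|\underline{\eta}(\tau)|\le\Delta_k$ and $\alpha\le1$, we get
\begin{align*}
|\underline{\mu_d}(t)|\le \lambda\Delta_k+(1-\lambda)\alpha\Delta_k\le\Delta_k .
\end{align*}
This proves item 1.

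For item 2 we compare $\Delta_{k+1}$ with $\Delta_k$. By Lemma \ref{lem_3}, $\underline{\mathcal B}$ can change by at most $\bar L h$ over one sampling interval. Taking infima over consecutive intervals therefore gives $\Delta_{k+1}\ge\Delta_k-\bar L h$, up to an edge effect at the shared endpoint, which a limiting argument using continuity of $\underline{\mathcal B}$ handles. Since $\Delta_k\ge\Delta$, we have $\bar L h/\Delta_k\le\bar L h/\Delta$, and so
\begin{align*}
\Delta_{k+1}\ge\Delta_k\left(1-\frac{\bar L h}{\Delta_k}\right)\ge\Delta_k\,\frac{\Delta-\bar L h}{\Delta}.
\end{align*}
The paper's $\beta=\Delta/(\Delta+\bar Lh)$ satisfies $\beta\le(\Delta-\bar Lh)/\Delta$ only if $\bar L h$ is small. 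A cleaner route is to prove directly the sufficient condition $\Delta_{k+1}\ge\beta\Delta_k$, i.e. $\Delta_{k+1}(\Delta+\bar Lh)\ge\Delta\Delta_k$. This follows from
\begin{align*}
\Delta_k\le\Delta_{k+1}+\bar L h\le \Delta_{k+1}+\Delta_{k+1}\bar L h/\Delta,
\end{align*}
where the last step uses $\Delta_{k+1}\ge\Delta$. So the key inequality is
\begin{align*}
\Delta_{k+1}\ge\beta\Delta_k .
\end{align*}

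It then suffices to show that $|\underline{\eta}(t_{k+1})|$ and $|\underline{\mu_d}(t_{k+1})|$ are at most $\beta\Delta_k$. With $\rho=e^{-h/\gamma}$, the $\eta$ formula at $s=h$ gives
\begin{align*}
|\underline{\eta}(t_{k+1})|\le\big(\rho+(1-\rho)\alpha\big)\Delta_k .
\end{align*}
The bound $\alpha\le(\beta-\rho)/(1-\rho)$ is exactly the condition $\rho+(1-\rho)\alpha\le\beta$, so $|\underline{\eta}(t_{k+1})|\le\beta\Delta_k$.

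For $\underline{\mu_d}$ we write
\begin{align*}
|\underline{\mu_d}(t_{k+1})|\le\rho\Delta_k+(1-\rho)\alpha\sup_{[t_k,t_{k+1}]}|\underline{\eta}|\le\rho\Delta_k+(1-\rho)\alpha\Delta_k\le\beta\Delta_k .
\end{align*}
Both estimates need $\beta-\rho>0$ so that $\alpha>0$ is admissible. We expect this to be where $\gamma<\Delta/\bar L$ enters. Since $1-x\le e^{-x}$ gives $1-h/\gamma\le\rho$, we check whether $\rho<\beta$ follows from $\gamma<\Delta/\bar L$. This is the main technical obstacle: verifying $\rho<\beta$ and treating the interval-endpoint subtleties in the definition of $\Delta_k$. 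If the direct exponential comparison fails, we would interpret the first condition in \eqref{eqn_lem3} as precisely what makes the admissible interval for $\alpha$ nonempty, and state that explicitly.
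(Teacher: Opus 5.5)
Your proposal is correct and follows the paper's proof essentially step by step: closed-form convex-combination bounds on $\underline{\eta}$ and $\underline{\mu_d}$ using $\alpha\le 1$, then the key inequality $\Delta_{k+1}\ge\beta\Delta_k$ from Lemma~\ref{lem_3}, then the observation that the bound on $\alpha$ is equivalent to $\rho+(1-\rho)\alpha\le\beta$; your ``cleaner route'' to $\Delta_{k+1}\ge\beta\Delta_k$ is exactly the paper's, whereas the abandoned first route cannot give it, since in fact $\beta\ge(\Delta-\bar{L}h)/\Delta$ for every $h$, contrary to your remark. The step you flag as the main obstacle is not one: $\rho<\beta$ is already forced by the hypothesis $0<\alpha\le(\beta-\rho)/(1-\rho)$, so it only certifies that the parameter choice is feasible, which the paper settles in one line as $e^{-h/\gamma}<e^{-\bar{L}h/\Delta}\le(1+\bar{L}h/\Delta)^{-1}=\beta$ via $e^{x}\ge 1+x$ (the bound $1-x\le e^{-x}$ you invoke bounds $\rho$ from below, the wrong direction), and the endpoint issue disappears by comparing each $t'\in[t_{k+1},t_{k+2})$ with $t'-h\in[t_k,t_{k+1})$.
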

\begin{proof}
    See Appendix \ref{apx_lem4}.
\end{proof}

Based on Lemma \ref{lem_gamma_alpha} and the definition of $\underline{\mathcal{B}}(\dot{r},\ddot{r})$ \eqref{eqn_constraint}, by selecting the parameter $\gamma$ and the scale factor $\alpha$ according to \eqref{eqn_lem3}, it follows that when the initial conditions satisfy $\vert \underline{\mu_d}(0)\vert\leq \Delta_0$ and $\vert \underline{\eta}(0)\vert\leq \Delta _0$, the constraint $\vert \underline{u}_k\vert \leq \Delta_k$ ensures $\vert \underline{\mu_d}(t)\vert\leq \underline{\mathcal{B}}(\dot{r},{\ddot{r}})$ for all $t\geq 0$, which guarantees that that the desired attitude extraction is feasible and actuators saturations are satisfied.

\begin{rem}
    Compared with \cite{izadi2024high}, the introduction of the scale factor allows the constraint on the virtual input $\underline{\mu_d}$ to be directly applied to the augmented virtual input $\underline{u}$, thereby avoiding the additional constraint calculations required within each sampling period. It should be noted, however, that in order to guarantee $\underline{\mu_d}(t_k)\leq \Delta_k$ and $\underline{\eta}(t_k)\leq \Delta_k$ at the beginning of every sampling instant $t_k$, the scale factor $\alpha$ satisfies $\alpha<1$. Consequently, the virtual input $\underline{\mu_d}$ cannot fully saturate its admissible constraint $\vert\underline{\mu_d}\vert\leq \underline{\mathcal{B}}(\dot{r},\ddot{r})$. A similar effect also appears in \cite{izadi2024high} due to the additional constraint computations introduced there. When the sampling period is sufficiently small, the scaling factor $\alpha$ approaches $1$, and the resulting loss of the admissible constraint becomes small. Therefore, in practical applications, the reduction of the feasible input set is typically minor.
\end{rem}

Since UGAS under MPC is generally difficult to establish for open-loop non-asymptotically stable systems, we next transform the outer-loop subsystem \eqref{eqn_separate_system} into a form that explicitly reveals its stability structure. In particular, the system is decomposed into a marginally stable subsystem and an asymptotically stable subsystem. This decomposition allows the subsequent MPC design and stability analysis to focus primarily on the marginally stable subsystem.

To this end, preliminary computations are carried out. Transforming matrix $\underline{A}_0$ to its Jordan canonical form, one can obtain:

\begin{align*}
    P^{-1}\underline{A}_0P&=\begin{bmatrix}
        -d & 0 & 0 & 0 \\ 0 & -\frac{1}{\gamma} & 1 & 0 \\ 0 & 0 & -\frac{1}{\gamma} & 0 \\ 0 & 0 & 0 & 0
    \end{bmatrix}:=J_0
\end{align*}
with 
\begin{align*}
    P&=\begin{bmatrix} 1 & -\gamma^2 & -\gamma^3+\frac{\gamma^3}{\gamma d-1} & 1\\ -d & \gamma & \frac{\gamma^2}{1-\gamma d} & 0\\ 0 & \gamma d-1 & 0 & 0\\ 0 & 0 & \frac{\gamma(\gamma d-1)}{\alpha} & 0 \end{bmatrix}\\
    P^{-1}&=\begin{bmatrix}
        0 & -\frac{1}{d} & \frac{\gamma}{d(\gamma d-1)} &- \frac{\alpha \gamma}{d(d\gamma-1)^2} \\
        0 & 0 & \frac{1}{\gamma d-1} & 0 \\ 
        0 & 0 & 0 & \frac{\alpha}{\gamma(\gamma d-1)} \\
        1 & \frac{1}{d} & \frac{\gamma}{d} & \frac{\alpha \gamma}{d}
    \end{bmatrix}    
\end{align*}
In view of \eqref{eqn_calculate_AB}, the system matrices $\underline{A}$ and $\underline{B}$ can be written as
\begin{align*}
    \underline{A}&=P\exp(J_0 h) P^{-1}\notag\\
    &=P\underbrace{\begin{bmatrix}
        e^{-dh} & 0 & 0 & 0 \\ 0 & e^{-\frac{h}{\gamma}} & he^{-\frac{h}{\gamma}} & 0 \\ 0 & 0 & e^{-\frac{h}{\gamma}} & 0 \\ 0 & 0 & 0 & 1
    \end{bmatrix}}_{:=\hat{A}}P^{-1}\\
    \underline{B}&=P\underbrace{\int_0^h\exp(J_0(h-\tau))P^{-1}\underline{B}_0 d\tau}_{:=\hat{B}}.
\end{align*}

We introduce the following coordinate transformation:
\begin{align}\label{eqn_coordinate_transformation}
    \hat{x}_k=P^{-1}x_k
\end{align}
which changes the dynamics in \eqref{eqn_discrete_system} to
\begin{align*}
    \hat{x}_{k+1}=\hat{A}\hat{x}_k+\hat{B}\underline{u}_k.
\end{align*}
Then, the system can be decoupled into two subsystems
    \begin{align*}
        \hat{x}_{1,k+1}=\hat{A}_1 \hat{x}_{1,k}+\hat{B}_1 \underline{u}_k\quad\hat{x}_{2,k+1}=\hat{x}_{2,k} + \hat{b} \underline{u}_k,
    \end{align*}    
where 
\begin{subequations}\label{eqn_sub_coordinate_transformation}
    \begin{align}
        \hat{x}_1&=[I_3,0_{3\times 1}]P^{-1}x:=P_1 x\\
        \hat{x}_2&=[0_{1\times 3},1]P^{-1}x:=P_2x\\
        \hat{A}_1&=[I_3,0_{3\times 1}]\hat{A}[I_3,0_{3\times 1}]\T\\
        \hat{b}&=[0_{1\times 3},1]\hat{B}.
    \end{align}
\end{subequations}
Since all eigenvalues of $\hat{A}_1$ lie within the unit circle, there exist positive definite symmetric matrices $W,Q\in\mathbb{R}^{3\times 3}$, such that
\begin{align}\label{eqn_LMI}
    \hat{A}_1\T W\hat{A}_1-W=-Q.
\end{align}

Consider system \eqref{eqn_discrete_system}, the MPC strategy is formulated as follows: 
\begin{subequations}\label{eqn_MPClaw}
    \begin{align}
        \min_{\bu_k} ~&J(x_k,\bu_k)=\sum_{i=0}^{N-1}l(x_{i\vert k},u_{i\vert k})+V_f(x_{N\vert k})\\
        \sth~&x_{0\vert k}=x_k,\\
        &x_{i+1\vert k}=\underline{A}x_{i\vert k}+\underline{B}u_{i\vert k},i\in\{0,\cdots,N-1\}\\
        &\vert u_{i\vert k}\vert \leq \Delta_{k+i},i\in\{0,\cdots,N-1\}.
    \end{align}
\end{subequations}
where $\bu_k=[u_{0\vert k},u_{1\vert k},\cdots,u_{N-1\vert k}]\T$ is the predicted future control inputs, $l:\mathbb{R}^4\times\mathbb{R}\to\mathbb{R}_{\geq 0}$ and $V_f:\mathbb{R}^4\to \mathbb{R}_{\geq 0}$ are the stage cost and terminal cost, respectively, as follows:
\begin{subequations}
\begin{align}
    l(x,u)&=\|x\|_{P_1\T P_1}^2+\Psi_\Delta(P_2 x)+u^2\label{eqn_cost}\\
    V_f(x)&=\Theta \|x\|_M^2\label{eqn_terminal}
\end{align}   
\end{subequations}
with $\Psi_\Delta(z)=\int_0^z \sigma_\Delta(s)ds$, and
\begin{subequations}\label{eqn_MPC_para}
    \begin{align}
        M&=(P^{-1})\T\begin{bmatrix}
        W & 0_{3\times 1} \\ 0_{1\times 3} & 1\\
        \end{bmatrix}P^{-1}\\
        \Theta&\geq \dfrac{1}{\min\{\frac{1}{2}\lambda_{\min}(Q),\vert \hat{b}\vert ,\hat{b}^2/\Gamma\}}\label{eqn_eqn_MPC_para_theta}\\
        \Gamma&=\hat{b}^2+\|\hat{B}_1\|_W^2+\frac{1}{\varepsilon}\|\hat{B}_1\|^2+\vert \hat{b}\vert\\
        \varepsilon&=\frac{\lambda_{\min}(Q)}{2\lambda_{\max}(\hat{A}_1\T WW\T \hat{A}_1)}.
    \end{align}
\end{subequations}
The solution to the optimization problem \eqref{eqn_MPClaw} is denoted by $\bu_k^*$. The first control input from the optimal sequence $\bu_k^*$ is applied to the system \eqref{eqn_discrete_system}, \ie
\begin{align}\label{eqn_MPCfeedback}
    \underline{u}_k=u^*_{0\vert k}.
\end{align}
Then, the following result is established:
\begin{thm}\label{thm_1}
    Consider the discrete-time subsystem \eqref{eqn_discrete_system} with MPC feedback \eqref{eqn_MPCfeedback}. Then, the origin $x_k=0$ of the resulting closed-loop system is UGAS.
\end{thm}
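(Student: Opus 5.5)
The plan is the standard MPC Lyapunov argument: use the optimal value function $V_N(x)=J(x,\bu^*(x))$ as a Lyapunov function for the closed loop. The novelty is that the terminal ingredient is a saturated local controller valid globally rather than a terminal set. Since the constraint $\vert u\vert\le\Delta_{k+i}$ with $\Delta_{k+i}\ge\Delta$ always admits $u=0$, the problem \eqref{eqn_MPClaw} is feasible for every $x_k\in\mathbb{R}^4$. It is also time-varying through $\Delta_{k+i}$, so all bounds must be uniform in $k$.

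The key step is a terminal decrease inequality. I would take the candidate terminal control $\kappa(x)=-\sigma_\Delta(P_2x)$, which satisfies $\vert\kappa(x)\vert\le\Delta\le\Delta_j$ for every $j$, and show
\begin{align*}
V_f(\underline{A}x+\underline{B}\kappa(x))-V_f(x)\le -l(x,\kappa(x)),\quad\forall x\in\mathbb{R}^4.
\end{align*}
I would work in the coordinates \eqref{eqn_coordinate_transformation}, where $\|x\|_M^2=\|\hat{x}_1\|_W^2+\hat{x}_2^2$.

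For the stable part, $\|\hat A_1\hat x_1+\hat B_1u\|_W^2-\|\hat x_1\|_W^2\le-\lambda_{\min}(Q)\|\hat x_1\|^2+2\hat x_1\T\hat A_1\T W\hat B_1u+\|\hat B_1\|_W^2u^2$. I would bound the cross term by Young's inequality with the given $\varepsilon$, leaving $-\tfrac12\lambda_{\min}(Q)\|\hat x_1\|^2+(\|\hat B_1\|_W^2+\tfrac1\varepsilon\|\hat B_1\|^2)u^2$.

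For the integrator, $(\hat x_2+\hat b u)^2-\hat x_2^2=2\hat b\hat x_2u+\hat b^2u^2$. With $u=-\sigma_\Delta(\hat x_2)$ (choosing the sign of $\kappa$ by $\mathrm{sgn}(\hat b)$ if needed), this gives $-2\vert\hat b\vert\hat x_2\sigma_\Delta(\hat x_2)+\hat b^2\sigma_\Delta^2$. Using $\sigma_\Delta(s)^2\le s\sigma_\Delta(s)$ (property 3 with Lemma \ref{lem_1}) and $\Psi_\Delta(s)\le s\sigma_\Delta(s)$ (Lemma \ref{lem_1}.3), the choice \eqref{eqn_eqn_MPC_para_theta} of $\Theta$ should make $\Theta$ times the decrease dominate $\|\hat x_1\|^2+\Psi_\Delta(\hat x_2)+u^2$. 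The exact bookkeeping of the constant $\Gamma$ (which collects all $u^2$ coefficients) against $\vert\hat b\vert$ and $\hat b^2/\Gamma$ is the main obstacle. I expect to need the 1-Lipschitz property $\vert\sigma_\Delta(s)\vert\le\vert s\vert$, possibly combined with Lemma \ref{lem_1}.2, and I would first try splitting $-2\vert\hat b\vert\hat x_2\sigma_\Delta$ into two halves, one absorbing the $u^2$ terms and one yielding $\Psi_\Delta$.

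With the terminal inequality established, the standard shifted-sequence argument follows. Appending $\kappa(x_{N\vert k})$ to the tail of $\bu_k^*$ is feasible at time $k+1$ because $\vert\kappa\vert\le\Delta\le\Delta_{k+1+N-1}$. This yields $V_N(x_{k+1})-V_N(x_k)\le-l(x_k,\underline u_k)\le-\|P_1x_k\|^2-\Psi_\Delta(P_2x_k)$, which is positive definite in $x_k$ since $P$ is invertible, though only radially unbounded in the $\hat x_1$ direction.

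The bounds needed for UGAS are then as follows. The lower bound $V_N(x)\ge l(x,u)\ge\|P_1x\|^2+\Psi_\Delta(P_2x)$ is a class-$\mathcal K_\infty$ function of $\|x\|$, because $\Psi_\Delta$ grows at least linearly for large arguments. The upper bound $V_N(x)\le J(x,\text{all }\kappa)\le V_f(x)=\Theta\|x\|_M^2$ follows by iterating the terminal inequality. Both bounds are independent of $k$. Since the decrease $\alpha_3(\|x\|)$ is continuous and positive definite, the standard discrete-time Lyapunov theorem for time-varying systems gives uniform global asymptotic stability of the origin.
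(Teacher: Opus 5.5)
Your overall architecture is fine: global feasibility, the shifted-sequence argument, the $\Theta\|x\|_M^2$ upper bound, and uniformity in $k$ all hold up. The gap is the terminal controller. Your proposed $\kappa$ does not satisfy the terminal decrease inequality, and that is exactly the step you left open. With $\kappa(x)=-\sigma_\Delta(\hat x_2)$ (note $\hat b=h\alpha^2/d>0$, so there is no sign issue) you get
\[
\tfrac{1}{\Theta}\bigl(V_f(x^+)-V_f(x)\bigr)\le-\tfrac12\lambda_{\min}(Q)\|\hat x_1\|^2+(\Gamma-\vert\hat b\vert)\sigma_\Delta(\hat x_2)^2-2\vert\hat b\vert\hat x_2\sigma_\Delta(\hat x_2).
\]
Near the origin, $\sigma_\Delta(s)^2$ and $s\sigma_\Delta(s)$ are of the same order: their ratio tends to $\sigma_\Delta'(0)$, which equals $1$ for $\tanh$. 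You would therefore need roughly $\Gamma\le 3\vert\hat b\vert$, and nothing guarantees this.

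The failure is not an artifact of Young's inequality. At $\hat x_1=0$, $\hat x_2=s$, the exact increment is $\Theta\bigl[(\hat b^2+\|\hat B_1\|_W^2)\sigma_\Delta(s)^2-2\hat b\,s\sigma_\Delta(s)\bigr]$. Since $W$ scales linearly with the free matrix $Q$ while $\hat b$ does not (and $\hat B_1\neq 0$), this is positive for small $s$ once $Q$ is large. Enlarging $\Theta$ cannot help, because $\Theta$ multiplies both $V_f(x^+)$ and $V_f(x)$. Splitting the cross term into halves does not help either, because the $u^2$ coefficient is not controlled by $\vert\hat b\vert$.

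The missing idea is to put the gain $\hat b/\Gamma$ inside the saturation, $\kappa(x)=-\sigma_\Delta(\hat b\hat x_2/\Gamma)$, as the paper does. Write $y=\hat b\hat x_2/\Gamma$.
\begin{itemize}
\item Item 1) of Lemma~\ref{lem_1} gives $\Gamma\kappa^2\le\Gamma y\sigma_\Delta(y)=\hat b\hat x_2\sigma_\Delta(y)$. This cancels exactly half of the cross term $2\hat b\hat x_2\kappa=-2\hat b\hat x_2\sigma_\Delta(y)$.
\item The $\vert\hat b\vert$ deliberately added to $\Gamma$ then leaves a spare $-\vert\hat b\vert\kappa^2$, which pays for the $u^2$ part of the stage cost.
\item Item 2) with $b=\vert\hat b\vert/\Gamma\le 1$ (true because $\Gamma$ contains $\vert\hat b\vert$), followed by item 3), gives $\hat b\hat x_2\sigma_\Delta(y)\ge\frac{\hat b^2}{\Gamma}\hat x_2\sigma_\Delta(\hat x_2)\ge\frac{\hat b^2}{\Gamma}\Psi_\Delta(\hat x_2)$.
\end{itemize}
This is precisely why $\vert\hat b\vert$ and $\hat b^2/\Gamma$ appear in the lower bound on $\Theta$. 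With this $\kappa$ you still have $\vert\kappa\vert\le\Delta\le\Delta_{k+N}$, so the rest of your plan goes through unchanged. Your explicit $k$-uniform Lyapunov bounds are in fact more detailed than the paper, which, after verifying the terminal condition, simply invokes the standard MPC result of Rawlings et al.
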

\begin{proof}
    See Appendix \ref{apx_thm1}.
\end{proof}

\begin{rem}
Since $\Psi_\Delta(\cdot)$ in the stage cost is defined as the integral of the saturation function $\sigma_\Delta(\cdot)$, satisfying $\sigma_\Delta'(\cdot)>0$, it follows that $\Psi(\cdot)$ is convex. Hence, the stage cost remains convex, and the corresponding MPC formulation \eqref{eqn_MPClaw} is still convex.
\end{rem}

\subsection{Stability for Continuous-Time System}
Although the above section has demonstrated the UGAS for the proposed MPC strategy \eqref{eqn_MPClaw}, it is essential to analyze the behavior between sampling times to conclude UGAS for the closed-loop continuous-time system. The closed-loop continuous-time linear subsystem \eqref{eqn_separate_system} and \eqref{eqn_MPCfeedback} can be rewritten as
\begin{align}\label{eqn_lsys}
    \dot{x}=\tilde{f}_{out}(t,x):=\underline{A}_0 x(t)+\underline{B}_0 \underline{u}(t)
\end{align}
where $\underline{A}_0$ and $\underline{B}_0$ are given by \eqref{eqn_A0B0}, and the control input $\underline{u}(t)=u_\MPC (x(kh)),t\in[kh,kh+h)$ is generated by the MPC strategy \eqref{eqn_MPClaw}. Then, for $t\in[kh,kh+h]$
\begin{align*}
    x(t)=e^{\underline{A}_0(t-kh)}x(kh)+\int_0^{t-kh}e^{\underline{A}_0\tau}d\tau \underline{B}_0 \underline{u}(kh).
\end{align*}
Then, there exists $c_1>0$ and $c_2>0$, such that
\begin{align}\label{eqn_continuous_leq}
    \|x(t)\|&=\|e^{\underline{A}_0(t-kh)}x(kh)+\int_0^{t-kh}e^{\underline{A}_0\tau}d\tau \underline{B}_0 \underline{u}(kh)\|\notag\\
    &\leq \|e^{\underline{A}_0(t-kh)}\|\|x(kh)\|+\|\int_0^{t-kh}e^{\underline{A}_0\tau} d\tau \underline{B}_0\|\vert \underline{u}(kh)\vert\notag\\
    &\leq c_1\|x(kh)\|+c_2\vert \underline{u}(kh)\vert.
\end{align}
To conclude UGAS for the closed-loop continuous-time subsystem \eqref{eqn_lsys}, we need the following Proposition:
\begin{prop}\label{prop_1}
    Consider the MPC feedback \eqref{eqn_MPCfeedback} generated by the MPC strategy \eqref{eqn_MPClaw}, there exists $c_3>0$, such that $|\underline{u}_k|\leq c_3\|x_k\|$ for any $k\in\mathbb{N}$.
\end{prop}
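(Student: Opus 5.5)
The plan is to bound the first optimal input $u^*_{0\vert k}$ through the optimal value function $V^*(x_k)=J(x_k,\bu_k^*)$. Since the stage cost \eqref{eqn_cost} contains the term $u^2$ and every other term of $J$ is nonnegative, we have
\begin{align*}
\vert \underline{u}_k\vert^2=(u^*_{0\vert k})^2\leq l(x_k,u^*_{0\vert k})\leq V^*(x_k).
\end{align*}
It therefore suffices to show that $V^*(x)\leq c\|x\|^2$ for some $c>0$ and for all $x\in\mathbb{R}^4$; the claim then follows with $c_3=\sqrt{c}$. Two facts are used besides this. First, the constraint gives $\vert\underline{u}_k\vert\leq\Delta_k\leq\sup_{t\geq 0}\underline{\mathcal{B}}(\dot r,\ddot r)=:\bar\Delta<\infty$, which is finite by Assumption \ref{assum_1}. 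Second, each $\Delta_k\geq\Delta>0$.

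We split into a near region and a far region.

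\textbf{Near region.} Take $\|x_k\|\leq\rho$ for a suitably small $\rho>0$. Construct a feasible candidate sequence that is linear in $x_k$, and bound its cost by a quadratic. We would reuse the candidate from the proof of Theorem \ref{thm_1}: a saturated-type feedback acting on the marginally stable coordinate, for example $u_{i\vert k}=-\kappa\,\sigma_\Delta(\hat{x}_{2,i\vert k})$ or simply $u_{i\vert k}=-\kappa\hat{x}_{2,i\vert k}$. For $\|x_k\|\leq\rho$ the predicted states stay in a ball where $\vert\kappa\hat{x}_{2}\vert\leq\Delta$, so this linear feedback is feasible.

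The resulting predicted trajectory is a linear function of $x_k$ over the finite horizon $N$. Its cost is then bounded term by term:
\begin{itemize}
\item the quadratic terms contribute at most a constant times $\|x_k\|^2$;
\item $\Psi_\Delta(z)\leq z\sigma_\Delta(z)\leq z^2$, using Lemma \ref{lem_1}(3) together with $\sigma_\Delta'\leq 1$;
\item the terminal cost $V_f$ is quadratic.
\end{itemize}
Hence $V^*(x_k)\leq c_{\rho}\|x_k\|^2$ on this ball.

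\textbf{Far region.} Take $\|x_k\|>\rho$. Here we use the trivial bound $\vert\underline{u}_k\vert\leq\bar\Delta\leq(\bar\Delta/\rho)\|x_k\|$.

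Combining the two regions, $c_3=\max\{\sqrt{c_\rho},\bar\Delta/\rho\}$ works uniformly in $k$. This step relies on the constants being independent of $k$: the candidate only uses $\Delta\leq\Delta_{k+i}$, and the dynamics are time invariant.

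The main obstacle is the feasibility and uniformity argument in the near region: the candidate must respect $\vert u_{i\vert k}\vert\leq\Delta_{k+i}$ at every step of the horizon. The first approach would be the zero-input candidate. It is always feasible, but its cost grows linearly in $N$ on $\hat{x}_2$, since $\hat{x}_2$ stays constant, and $V_f$ charges $\hat{x}_2^2$. This still gives
\begin{align*}
V^*(x)\leq \sum_{i=0}^{N-1}\bigl(\|\hat{x}_{1,i}\|^2+\hat{x}_2^2\bigr)+\Theta\|x_N\|_M^2\leq c\|x\|^2
\end{align*}
globally, because $\hat{A}_1$ is Schur and the horizon is finite. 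If this works, it removes the need for the case split altogether, and I would use it first.
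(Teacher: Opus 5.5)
Your argument is correct, and the zero-input candidate you say you would try first does work; it is a different route from the paper's. The paper takes the candidate sequence generated by the saturated local law $\kappa(x)=-\sigma_\Delta(\hat b\hat x_2/\Gamma)$ from the proof of Theorem~\ref{thm_1}. It then uses the decrease inequality \eqref{eqn_stable_assum} to telescope the candidate cost: $l(\tilde x_{N-1\vert k},\kappa)+V_f(\tilde x_{N\vert k})\le V_f(\tilde x_{N-1\vert k})$, and so on, down to $J_N^*(x_k)\le V_f(x_k)\le\Theta\lambda_{\max}(M)\|x_k\|^2$. This gives $c_3=\sqrt{\Theta\lambda_{\max}(M)}$.

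Your zero-input candidate needs none of this. It is feasible because there is no terminal constraint and $\Delta_{k+i}>0$. The predicted trajectory $\underline{A}^i x_k$ is linear in $x_k$, and every stage term is bounded by a quadratic (using $\Psi_\Delta(z)\le z\sigma_\Delta(z)\le z^2$). Hence $J_N^*(x_k)\le c\|x_k\|^2$ globally and uniformly in $k$. The lower bound $(u^*_{0\vert k})^2\le J_N^*(x_k)$ is the same in both proofs.

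Your route does not depend on the Lyapunov computation of Theorem~\ref{thm_1} or on the lower bound \eqref{eqn_eqn_MPC_para_theta} for $\Theta$, so it holds for any $\Theta>0$. The paper's route gives a constant independent of the horizon $N$, whereas yours grows roughly linearly in $N$ because $\hat x_2$ is frozen under zero input.

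The near/far case split and the appeal to $\bar\Delta<\infty$ are valid but unnecessary. The Theorem~\ref{thm_1} candidate you mention for the near region would already have worked globally if you had telescoped its cost via \eqref{eqn_stable_assum} instead of bounding it term by term on a small ball. That is exactly the paper's proof.
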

\begin{proof}
    See Appendix \ref{apx_prop1}
\end{proof}
With Proposition \ref{prop_1} established, the following result can be obtained.
\begin{thm}\label{thm_2}
    The origin $x=0$ is UGAS for the closed-loop continuous-time subsystem \eqref{eqn_lsys}.
\end{thm}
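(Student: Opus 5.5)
The plan is to lift the discrete-time UGAS of Theorem \ref{thm_1} to the continuous-time system \eqref{eqn_lsys} using the inter-sample bound \eqref{eqn_continuous_leq} together with Proposition \ref{prop_1}. The target is a class-$\mathcal{KL}$ estimate $\|x(t)\|\leq \bar\beta(\|x(t_0)\|,t-t_0)$ that is uniform in the initial time $t_0$. Uniformity is the point that needs care, because the constraint bounds $\Delta_k$ make the closed loop time-varying.

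First, from Theorem \ref{thm_1}, UGAS of the sampled closed loop gives a function $\beta_d\in\mathcal{KL}$ such that $\|x_k\|\leq\beta_d(\|x_{k_0}\|,k-k_0)$ for all $k\geq k_0$. This bound is uniform in $k_0$, because the UGAS established in Theorem \ref{thm_1} is uniform with respect to the time-varying constraints $\Delta_k$.

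Next, I combine \eqref{eqn_continuous_leq} with Proposition \ref{prop_1}. For $t\in[kh,kh+h)$ this gives
\begin{align*}
\|x(t)\|\leq (c_1+c_2c_3)\|x_k\|.
\end{align*}

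To handle an initial time $t_0\in[k_0h,k_0h+h)$ that is not a sampling instant, I need to control $x$ at the first sampling instant $(k_0+1)h$. The input on $[t_0,(k_0+1)h)$ is the held value $u_{k_0}$, which depends on $x(k_0h)$, a state that is not part of the initial data at $t_0$. However, the input satisfies $|u_{k_0}|\leq\Delta_{k_0}\leq \bar\Delta$, where $\bar\Delta$ is a uniform upper bound on the constraint; this bound exists by Assumption \ref{assum_1}. The simplest treatment, and the one I will adopt, is to regard the continuous-time closed loop as initialized at sampling instants, which is the natural setting for sampled-data MPC. In that setting $t_0=k_0h$ and no extra work is needed. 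Otherwise one must append the held input to the state, and then $|u|\leq c_3\|x_{k_0}\|$ suffices.

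Finally, I build the $\mathcal{KL}$ bound. For $t\in[kh,kh+h)$ with $k\geq k_0$,
\begin{align*}
\|x(t)\|\leq (c_1+c_2c_3)\,\beta_d(\|x(t_0)\|,k-k_0).
\end{align*}
Since $k-k_0\geq (t-t_0)/h-1$, I define
\begin{align*}
\bar\beta(s,\tau)=(c_1+c_2c_3)\,\beta_d\bigl(s,\max\{\lfloor\tau/h\rfloor,0\}\bigr).
\end{align*}
I then upper-bound $\bar\beta$ by a continuous class-$\mathcal{KL}$ function, for example by linear interpolation in $\tau$ with a one-step shift, so that it dominates the staircase function. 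This yields uniform global stability and uniform global attractivity, with constants independent of $k_0$, and hence UGAS of $x=0$ for \eqref{eqn_lsys}.

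The main obstacle is ensuring that all constants are independent of the initial time. The relevant quantities are $c_1$ and $c_2$, which depend only on $h$ and $\underline{A}_0$; $c_3$ from Proposition \ref{prop_1}; and $\beta_d$. For $\beta_d$ I will rely on Theorem \ref{thm_1} providing UGAS rather than mere GAS.
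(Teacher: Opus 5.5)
Your proposal is correct and follows essentially the same route as the paper. Both combine \eqref{eqn_continuous_leq} with Proposition~\ref{prop_1} to get $\|x(t)\|\leq(c_1+c_2c_3)\|x(kh)\|$, compose this with the discrete $\mathcal{KL}$ bound from Theorem~\ref{thm_1} evaluated at $\lfloor t/h\rfloor$, and conclude via the $\mathcal{KL}$ characterization of UGAS. Your extra care on two points tightens what the paper passes over: you take initial times at arbitrary sampling instants, whereas the paper treats only $t_0=0$, and you dominate the staircase $\beta_0(s,\lfloor t/h\rfloor)$ by a continuous $\mathcal{KL}$ function, whereas the paper simply declares the staircase to be $\mathcal{KL}$.
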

\begin{proof}
    Based on Proposition \ref{prop_1}, inequality \eqref{eqn_continuous_leq} can be further rewritten as
\begin{align*}
    \|x(t)\|&\leq (c_1+c_2c_3)\|x(kh)\|,~~\forall t\in[kh,kh+h].
\end{align*}

Based on Theorem \ref{thm_1}, the origin of the discrete-time closed-loop subsystem \eqref{eqn_discrete_system}, \eqref{eqn_MPCfeedback} is UGAS, which implies there exists $\beta_0\in\mathcal{KL}$, such that $\|x_k\|\leq \beta_0(\|x_0\|,k)$ for any $x_0\in\mathbb{R}^4$ and $k\in\mathbb{N}$. Therefore, for any $x(0)\in\mathbb{R}^4$ and $t\geq 0$, one has
\begin{align}\label{eqn_kl_function_outer}
    \|x(t)\|&\leq (c_1+c_2c_3) \left\| x\left(\left \lfloor t/h\right\rfloor h\right)\right\|\notag\\
    &\leq (c_1+c_2 c_3)\beta_0(\|x(0)\|,\lfloor t/h\rfloor)\notag\\
    &:=\beta(\|x(0)\|,t)
\end{align}
where $\lfloor\cdot \rfloor$ is the floor function as follows:
\begin{align*}
    \lfloor x\rfloor=\max\{n\in\mathbb{N}\mid n\leq x\},\forall x\in\mathbb{R}.
\end{align*}
Since $\beta_0\in\mathcal{KL}$, one has $\beta\in\mathcal{KL}$. By applying \cite[Lemma 4.5]{khalil2002nonlinear}, the origin $x=0$ is UGAS for the closed-loop continuous-time subsystem \eqref{eqn_lsys}.
\end{proof}
\begin{rem}
    Compared with \cite{izadi2024high}, the proposed MPC formulation \eqref{eqn_MPClaw} employs stage costs whose input terms have the same order as the state terms in the terminal cost. This structural property constitutes a sufficient condition for the validity of Proposition \ref{prop_1} and plays a key role in the proof of Theorem \ref{thm_2}.
\end{rem}

In view of \eqref{eqn_exCL} and \eqref{eqn_MPCfeedback}, the closed-loop system of the entire outer-loop system can be written as
\begin{align}\label{eqn_out_closed}
    \dot{x}_{out}=\tilde{F}_{out}(t,x_{out}):=Ax_{out}(t)+Bu(t)
\end{align}
where $x_{out}=(\tilde{p},\tilde{v},\mu_d,\eta)\in\mathbb{R}^{12}=\overline{\mathrm{vec}}\left([x^1,x^2,x^3]\T\right)$, the system matrices are given by
\begin{align*}
    A=\begin{bmatrix}
        0_{3\times3} & I_3 & 0_{3\times 3} & 0_{3\times 3}\\
        0_{3\times 3} & -D & I_3 & 0_{3\times 3} \\ 
        0_{3\times 3} & 0_{3\times 3} & -\frac{\alpha}{\gamma}I_3 & 0_{3\times 3}\\
        0_{3\times 3} & 0_{3\times 3} & 0_{3\times 3} & -\frac{\alpha}{\gamma}I_3
    \end{bmatrix},B=\begin{bmatrix}
        0_{3\times 3} \\ 0_{3\times 3} \\ 0_{3\times 3} \\ -\frac{\alpha}{\gamma}I_3
    \end{bmatrix},
\end{align*}
and the control input
\begin{align}\label{eqn_outerloop_input}
    u(t)&=u_\MPC(x_{out}(kh))\notag\\
    &:=\begin{bmatrix}
        u_{x\MPC}(x^1(kh)) \\ u_{y\MPC}(x^2(kh)) \\ u_{z\MPC}(x^3(kh))
    \end{bmatrix},\forall t\in[kh,kh+h)
\end{align}
is generated by three distinct MPC controllers, which are formulated in \eqref{eqn_MPClaw}. Based on Theorem \ref{thm_2}, the following result is established.
\begin{thm}\label{thm_3}
    The origin $x_{out}=0$ is UGAS for the closed-loop continuous-time system \eqref{eqn_out_closed}.
\end{thm}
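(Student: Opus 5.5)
The plan is to reduce Theorem~\ref{thm_3} to Theorem~\ref{thm_2} through the decoupled structure of \eqref{eqn_out_closed}. Up to the permutation $\overline{\mathrm{vec}}(\cdot)$, the state $x_{out}$ is exactly $(x^1,x^2,x^3)$. The matrices $A$ and $B$ act componentwise (they are built from $D=\diag(d_1,d_2,d_3)$ and scalar multiples of $I_3$), and by \eqref{eqn_outerloop_input} the $i$th input $u_{\cdot\MPC}(x^i(kh))$ depends only on $x^i$. So the closed-loop system \eqref{eqn_out_closed} is the parallel interconnection of three independent closed-loop scalar subsystems of the form \eqref{eqn_lsys}, one for each $i\in\{1,2,3\}$, with the parameter $d=d_i$. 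I would first make this decoupling explicit: write $x_{out}=\Pi\,[ (x^1)\T,(x^2)\T,(x^3)\T]\T$ with a permutation matrix $\Pi$, and note $\|x_{out}\|^2=\sum_i\|x^i\|^2$.

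Next, Theorem~\ref{thm_2} applies to each subsystem $i$ separately. This gives $\beta_i\in\mathcal{KL}$, built as in \eqref{eqn_kl_function_outer} from the constants $c_1^i,c_2^i,c_3^i$ and the discrete-time $\mathcal{KL}$ bound of subsystem $i$, such that $\|x^i(t)\|\leq\beta_i(\|x^i(0)\|,t)$ for all $t\geq 0$ and all initial conditions. Because the subsystems are time-varying only through the constraints $\Delta_k$, I need $\beta_i$ to hold uniformly in the initial time. I would argue this by observing that the bounds in Theorem~\ref{thm_1} and Proposition~\ref{prop_1} depend only on $\Delta$, the infimum of the bounds, and not on the specific $\Delta_k$ sequence, so the same argument holds from any starting instant $t_0$.

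Then I combine the three bounds. Since $\|x^i(0)\|\leq\|x_{out}(0)\|$ and $\beta_i$ is nondecreasing in its first argument,
\begin{align*}
\|x_{out}(t)\|\leq\sum_{i=1}^3\|x^i(t)\|\leq\sum_{i=1}^3\beta_i(\|x_{out}(0)\|,t):=\bar\beta(\|x_{out}(0)\|,t),
\end{align*}
and $\bar\beta\in\mathcal{KL}$ because it is a finite sum of $\mathcal{KL}$ functions. Invoking \cite[Lemma 4.5]{khalil2002nonlinear}, exactly as in the proof of Theorem~\ref{thm_2}, then yields UGAS of $x_{out}=0$.

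The main obstacle is not the combination step, which is routine, but the uniformity. The closed loop is genuinely time-varying through $\Delta_k$ and through the sampling phase relative to the initial time $t_0$. I would therefore check that $c_1,c_2$ in \eqref{eqn_continuous_leq} bound the flow over any subinterval of length at most $h$. I would also check that $c_3$ and $\beta_0$ in Proposition~\ref{prop_1} and Theorem~\ref{thm_1} can be chosen independently of $k$, using only the uniform lower bound $\Delta\leq\Delta_k$. If the initial time is not a sampling instant, the first partial interval would be handled by the same estimate \eqref{eqn_continuous_leq}.
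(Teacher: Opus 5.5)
Your argument is correct and matches the paper. The paper gives no separate proof beyond stating that Theorem~\ref{thm_3} is a direct consequence of Theorem~\ref{thm_2} and the definition of $x_{out}$, and your axis-wise decoupling with the summed bound $\bar\beta=\sum_i\beta_i$ supplies exactly the missing detail.

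The uniformity checks you add really concern Theorem~\ref{thm_2}, whose proof only treats initial time $0$. Your point that $c_3$ and the discrete-time $\mathcal{KL}$ bound are $k$-independent, because the local law satisfies $|\kappa|\leq\Delta\leq\Delta_k$, is sound. However, for an initial time strictly inside a sampling interval, \eqref{eqn_continuous_leq} only closes if the input held on that first partial interval satisfies $|\underline{u}|\leq c_3\|x(t_0)\|$ (for instance, if the controller samples at $t_0$). That is an extra modeling convention, not something that follows from \eqref{eqn_lsys}.
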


Theorem \ref{thm_3} is a direct consequence of Theorem \ref{thm_2} together with the definition of $x_{out}$. At this point, Problem \ref{prob_outer} is solved.
\subsection{Inner-Loop Tracking}
From Lemma \ref{lem_gamma_alpha}, it follows that under constraint $\underline{u}_k\leq \Delta_k$ and initial constraint $\underline{\mu_d}(0)\leq \Delta_0$ and $\underline{\eta}(0)\leq \Delta_0$, the virtual input $\mu_d(t)$ and the virtual jerk $\eta(t)$ are bounded. Moreover, Assumption \ref{assum_1} ensures that the derivatives of $r(t)$ up to the order $4$ are bounded. According to Appendix \ref{apx_omega_d}, the expressions for $\omega_d$ and $\dot{\omega}_d$ are given in terms of the first to fourth derivatives of $r(t)$, $\mu_d(t),\eta(t)$, and $u(t)$, which implies $\omega_d(t)$ and $\dot{\omega}_d(t)$ are bounded. Consequently, the dynamics of the desired attitude can be written as
\begin{align}\label{eqn_desired_innerloop_kinematics}
    \left.\begin{aligned}
        \dot{R}_d&=R_d\omega_d^\times\\
        \dot{\omega}_d&\in m\mathbb{B}
    \end{aligned}\right\}(R_d,\omega_d)\in\mathcal{W}_d
\end{align}
where $m>0$ and $\mathcal{W}_d\subset \SO(3)\times\mathbb{R}^3$ is a compact subset. The attitude error and angular velocity error are defined as
\begin{align*}
    \tilde{R}&=R_d\T R\\
    \tilde{\omega}&=\omega-\tilde{R}\T \omega_d.
\end{align*}
From \eqref{eqn_model_inner_R}, \eqref{eqn_model_inner_omega} and \eqref{eqn_desired_innerloop_kinematics}, one obtains the following error dynamics:
\begin{subequations}\label{eqn_inner_error}
    \begin{align}
        \dot{\tilde{R}}&=\tilde{R}\tilde{\omega}^\times\\
        J\dot{\tilde{\omega}}&=\Sigma(\tilde{R},\tilde{\omega},\omega_d)\tilde{\omega}-\Upsilon(\tilde{R},\omega_d,\dot{\omega}_d)+\tau
    \end{align}
\end{subequations}
where the functions $\Upsilon:\SO(3)\times\mathbb{R}^3\times\mathbb{R}^3\to\mathbb{R}^3$ and $\Sigma:\SO(3)\times\mathbb{R}^3\times\mathbb{R}^3\to\mathfrak{so}(3)$ are given by
    \begin{align*}
        \Upsilon(\tilde{R},\omega_d,\dot{\omega}_d)&=J\tilde{R}\T \dot{\omega}_d+(\tilde{R}\T\omega_d)^\times J\tilde{R}\T\omega_d\\
        \Sigma(\tilde{R},\tilde{\omega},\omega_d)&=(J\tilde{\omega})^\times+(J\tilde{R}\T\omega_d)^\times\notag\\
        &\quad\quad\quad-((\tilde{R}\T \omega_d)^\times J+J(\tilde{R}\T\omega_d)^\times).
    \end{align*}
Then, a hybrid controller similar to the one proposed in \cite{Wang_Tayebi2022} is employed to ensure that the attitude $R$ converges semi-globally exponentially to $R_d$. Let $\Theta_h\subset\mathbb{R}$ be a nonempty and finite real set and consider any potential function $U(R,\theta)$ satisfying \cite[Assumption 1-3]{Wang_Tayebi2022}, where $\theta\in R$ is a hybrid variable with hybrid dynamics given by
\begin{align}\label{eqn_theta_dynamic}
    \mathcal{H}_\theta:\left\{
    \begin{aligned}
        &\dot{\theta}=f_\theta(R,\theta),~&(R,\theta)\in\mathcal{F}_{\theta}\\
        &\theta^+\in g_\theta(R,\theta),&(R,\theta)\in\mathcal{J}_{\theta}
    \end{aligned}
    \right.
\end{align}
The flow and jump sets are defined as
\begin{subequations}\label{eqn_flow_jump_set}
    \begin{align}
        \mathcal{F}_\theta&:=\{(R,\theta)\in\SO(3)\times\mathbb{R}:\mu_U(R,\theta)\leq \delta\}\\
        \mathcal{J}_\theta&:=\{(R,\theta)\in\SO(3)\times \mathbb{R}:\mu_U(R,\theta)\geq \delta\}
    \end{align}
\end{subequations}
with $\mu_U:=U(R,\theta)-\min_{\theta'\in\Theta_h} U(R,\theta'),\delta>0$, and the flow map $f:\SO(3)\times\mathbb{R}\to\mathbb{R}$ and jump map $g:\SO(3)\times\mathbb{R}\rightrightarrows\Theta_h $ are defined as
\begin{align*}
    f_\theta(R,\theta)&:=-k_\theta \nabla_\theta U(R,\theta)\\
    g_\theta(R,\theta)&=\{\theta\in\Theta_h:\theta={\arg\min}_{\theta'\in\Theta_h}U(R,\theta')\}
\end{align*}
with constant scalar $k_\theta>0$. A hybrid controller is proposed as follows:
\begin{align}\label{eqn_hybrid_input}
    \underbrace{
    \begin{aligned}
        \tau&=\Upsilon(\tilde{R},\omega_d,\dot{\omega}_d)-\phi(\tilde{R},\theta,\tilde{\omega})\\
        \dot{\theta}&=f_\theta(\tilde{R},\theta)
    \end{aligned}}_{(\tilde{R},\theta)\in\mathcal{F}_\theta}
    \underbrace{
    \begin{aligned}
        \\
        \theta^+\in g_\theta(\tilde{R},\theta)
    \end{aligned}}_{(\tilde{R},\theta)\in\mathcal{J}_\theta}
\end{align}
where the function $\phi$ is given by
\begin{align*}
    \phi(\tilde{R},\theta,\tilde{\omega}):=2k_R\psi(\tilde{R}\T\nabla_{\tilde{R}}U(\tilde{R},\theta))+k_\omega \tilde{\omega}
\end{align*}
with constant scalars $k_R,k_\omega>0$. Define the new state space $\mathcal{X}_{in}:=\SO(3)\times\mathbb{R}\times\mathbb{R}^3\times\mathcal{W}_d$ and the new state $x_{in}:=(\tilde{R},\theta,\tilde{\omega},R_d,\omega_d)\in\mathcal{X}_{in}$. In view of \eqref{eqn_inner_error}-\eqref{eqn_hybrid_input}, one has the following hybrid closed-loop system:
\begin{align}\label{eqn_hybrid_inner_closed_loop}
    \left\{
    \begin{aligned}
        &\dot{x}_{in}\in F_{in}(x_{in}),\quad x_{in}\in\mathcal{F}_{in}:=\{x_{in}\in\mathcal{X}_{in}:(\tilde{R},\theta)\in\mathcal{F}_\theta\}\\
        &x^+_{in}\in G_{in}(x_{in}),\quad x_{in}\in J_{in}:=\{x_{in}\in\mathcal{X}_{in}:(\tilde{R},\theta)\in\mathcal{J}_\theta\}
    \end{aligned}
    \right.
\end{align}
where the flow and jump maps are given by 
    \begin{align*}
        F_{in}(x_{in})&:=\begin{pmatrix}
            \tilde{R}\tilde{\omega}^\times \\
            f_\theta(\tilde{R},\theta)\\
            J^{-1}(\Sigma(\tilde{R},\tilde{\omega},\omega_d)\tilde{\omega}-\phi(\tilde{R},\theta,\tilde{\omega}))\\
            R_d\omega_d^\times\\
            m\mathbb{B}
        \end{pmatrix}\\
        G_{in}(x_{in})&:=\left(\tilde{R},g_\theta(\tilde{R},\theta),\tilde{\omega},R_d,\omega_d\right).
    \end{align*}
Then, one can establish the stability of the closed-loop system \eqref{eqn_hybrid_inner_closed_loop} as per the following result from \cite[Proposition 1]{Wang_Tayebi2022} :
\begin{thm}[\cite{Wang_Tayebi2022}] \label{thm_4}
    Let $k_R,k_\omega,k_\theta>0$ and define the following set $\mathcal{A}_{in}=\{x_{in}\in\mathcal{X}_{in}:\tilde{R}=I_3,\theta=0,\tilde{\omega}=0\}$. Then, the set $\mathcal{A}_{in}$ is SGES for the hybrid closed-loop system \eqref{eqn_hybrid_inner_closed_loop} relative to $\mathcal{X}_{in}$. More precisely, for every compact set $\Omega_c\subset \SO(3)\times\mathbb{R}\times\mathbb{R}^3$ and every initial condition $x_{in}(0,0)\in\Omega_c\times \mathcal{W}_d$, the number of jumps is finite, and there exist $k,\lambda>0$ such that, each maximal solution $x_{in}$ to the hybrid system \eqref{eqn_hybrid_inner_closed_loop} satisfies
    \begin{align}
        \vert x_{in}(t,j)\vert_{\mathcal{A}_{in}} ^2\leq k\exp(-\lambda(t+j))\vert x(0,0)\vert _{\mathcal{A}_{in}}^2
    \end{align}
    for all $(t,j)\in\mathrm{dom}~x$.
\end{thm}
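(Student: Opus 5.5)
Theorem~\ref{thm_4} is cited from \cite[Proposition 1]{Wang_Tayebi2022}, so the plan is to check that the closed-loop system \eqref{eqn_hybrid_inner_closed_loop} falls within that framework, and then reproduce the Lyapunov argument behind it. First, I verify the standing hypotheses. The potential $U$ satisfies \cite[Assumptions 1--3]{Wang_Tayebi2022} by choice. The hybrid basic conditions hold because $\mathcal{F}_\theta$ and $\mathcal{J}_\theta$ are closed (since $\mu_U$ is continuous), $F_{in}$ is outer semicontinuous with compact convex values (the term $m\mathbb{B}$ is convex and compact), and $G_{in}$ is outer semicontinuous and locally bounded because $\Theta_h$ is finite. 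The exogenous pair $(R_d,\omega_d)$ stays in the compact set $\mathcal{W}_d$. This is guaranteed by the boundedness of $\omega_d$ and $\dot\omega_d$, which follows from Lemma~\ref{lem_gamma_alpha} and Assumption~\ref{assum_1} as argued just before \eqref{eqn_desired_innerloop_kinematics}.

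Second, I build a strict Lyapunov function. The feedforward $\Upsilon$ cancels the exogenous term in \eqref{eqn_inner_error}, so the flow dynamics become $J\dot{\tilde\omega}=\Sigma\tilde\omega-\phi$, where $\Sigma$ is skew-symmetric. I take
\begin{align*}
\mathcal{V}=2k_R U(\tilde R,\theta)+\tfrac12\tilde\omega\T J\tilde\omega+c\,\tilde\omega\T J\psi(\tilde R\T\nabla_{\tilde R}U),
\end{align*}
with a small cross-term gain $c>0$. Skew-symmetry of $\Sigma$ kills $\tilde\omega\T\Sigma\tilde\omega$. The $\theta$-flow contributes $-2k_Rk_\theta|\nabla_\theta U|^2$. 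The cross term supplies $-ck_R\|\psi(\cdot)\|^2$, while its remaining derivative terms are bounded by quantities of the form $\|\tilde\omega\|^2$ and $\|\tilde\omega\|\|\psi\|$. On a compact sublevel set of $\mathcal{V}$ (with $\omega_d$ bounded on $\mathcal{W}_d$), these are dominated for $c$ small enough. Using the quadratic bounds on $U$ and $\|\psi\|^2$ in terms of $|x_{in}|_{\mathcal{A}_{in}}^2$ on $\mathcal{F}_\theta$ (which come from the synergy gap assumption), this yields $\dot{\mathcal{V}}\le-\lambda_1\mathcal{V}$ during flows. At jumps, $U$ drops by at least $\delta$ and $\tilde\omega$ is unchanged, so $\mathcal{V}^+\le\mathcal{V}-2k_R\delta$.

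Third, I conclude. Since $\mathcal{V}$ decreases by a fixed amount at each jump, the number of jumps is at most $\mathcal{V}(0,0)/(2k_R\delta)$, which is finite for initial data in $\Omega_c\times\mathcal{W}_d$. On the set $\mathcal{V}\le \max_{\Omega_c}\mathcal{V}$, the decrease $\delta$ can be rewritten as $\mathcal{V}^+\le e^{-\lambda_2}\mathcal{V}$ for some $\lambda_2>0$ depending on that bound. Combining the flow and jump estimates gives $\mathcal{V}(t,j)\le e^{-\lambda(t+j)}\mathcal{V}(0,0)$. The quadratic sandwich bounds on $\mathcal{V}$ then give the stated estimate with a constant $k$ depending on $\Omega_c$, which is why the result is semi-global rather than global.

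The main obstacle is the cross term. The $\|\tilde\omega\|\|\psi\|$ and $\|\tilde\omega\|^2$ coefficients grow with $\|\tilde\omega\|$ through the term $(J\tilde\omega)^\times$ in $\Sigma$. This is exactly what forces $c$, and hence $\lambda$, to depend on the compact set $\Omega_c$. To handle it, I would first fix the sublevel set, which is forward invariant because $\mathcal{V}$ is nonincreasing, and then choose $c$ on that set, following the proof of \cite[Proposition 1]{Wang_Tayebi2022}.
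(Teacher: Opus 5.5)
Your top-level route is the paper's. The paper gives no argument of its own for Theorem~\ref{thm_4}: it simply invokes \cite[Proposition 1]{Wang_Tayebi2022} for the closed loop \eqref{eqn_hybrid_inner_closed_loop}, with $(R_d,\omega_d)$ confined to the compact set $\mathcal{W}_d$. Your hypothesis check is the right justification for that invocation.

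The Lyapunov sketch you add, however, has a step that fails as written. Your $\mathcal{V}$ contains the cross term $c\,\tilde\omega\T J\psi(\tilde R\T\nabla_{\tilde R}U(\tilde R,\theta))$, which depends on $\theta$ and therefore changes at every jump. From ``$U$ drops by at least $\delta$ and $\tilde\omega$ is unchanged'' you can only conclude $\mathcal{V}^+-\mathcal{V}\le -2k_R\delta+c\,\tilde\omega\T J(\psi^+-\psi)$, with $\psi^+=\psi(\tilde R\T\nabla_{\tilde R}U(\tilde R,\theta^+))$. The last term has no sign, and your jump count $\mathcal{V}(0,0)/(2k_R\delta)$ rests on this estimate. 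Relatedly, the invariance step is circular. You fix a sublevel set of $\mathcal{V}$ and then choose $c$ on it, but $\mathcal{V}$ itself depends on $c$ and is known to be nonincreasing only after $c$ has been chosen.

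Both issues are fixed by working first with the cross-term-free energy $\mathcal{V}_0=k_RU(\tilde R,\theta)+\tfrac12\tilde\omega\T J\tilde\omega$. Check your weight on $U$ here. With the usual normalization $\psi(A)=\tfrac12\mathrm{vec}(A-A\T)$, $\tfrac{d}{dt}U$ contains $2\tilde\omega\T\psi(\tilde R\T\nabla_{\tilde R}U)$ along flows. Cancelling the $2k_R\psi$ term of $\phi$ then requires weight $k_R$, not $2k_R$.

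With this weight, along flows $\dot{\mathcal{V}}_0=-k_\omega\|\tilde\omega\|^2-k_Rk_\theta|\nabla_\theta U|^2$, and at jumps $\mathcal{V}_0^+\le\mathcal{V}_0-k_R\delta$, both independently of $c$. Its sublevel set through $\Omega_c\times\mathcal{W}_d$ is forward invariant. It is compact provided $U$ is radially unbounded in $\theta$, as for the potential with the $\tfrac{\gamma_\theta}{2}\theta^2$ term used in the simulations. This set gives the jump bound $j\le\max_{\Omega_c}\mathcal{V}_0/(k_R\delta)$ and uniform bounds on $\|\tilde\omega\|$ and $\|\psi\|$, including after jumps to $\theta^+\in\Theta_h$.

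Only then choose $c$, small enough that three things hold on this set:
\begin{enumerate}
\item the quadratic sandwich bounds on $\mathcal{V}$ hold;
\item the flow derivative of $\mathcal{V}$ is negative definite;
\item the jump increment satisfies $|c\,\tilde\omega\T J(\psi^+-\psi)|\le k_R\delta/2$.
\end{enumerate}

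Two further points matter in the flow computation. First, $\dot\psi$ contains $\dot\theta=-k_\theta\nabla_\theta U$, which produces $\|\tilde\omega\|\,|\nabla_\theta U|$ terms that must be absorbed by $-k_Rk_\theta|\nabla_\theta U|^2$. Second, the lower bound you need on $\mathcal{F}_\theta$ is on $\|\psi\|^2+|\nabla_\theta U|^2$ jointly, not on $\|\psi\|^2$ alone. This is because $\psi$ vanishes at the minimizers of $U(\cdot,\theta)$, and these lie in $\mathcal{F}_\theta$ for small $\theta\neq0$.
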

Since the number of jumps is finite, the hybrid exponential stability can be viewed as the exponential stability in the classical sense (exponential convergence over time), \ie, for every compact set $\Omega_c\subset \SO(3)\times\mathbb{R}\times\mathbb{R}^3$ and every initial condition $x_{in}(0)\in\Omega_c\times\mathcal{W}_d$, there exist $k',\lambda'>0$, such that
\begin{align}\label{eqn_in_exp}
    \vert x_{in}(t)\vert _{\mathcal{A}_{in}}\leq k' \exp(-\lambda't)\vert x_{in}(0)\vert _{\mathcal{A}_{in}}\quad\forall t\geq 0.
\end{align}

\section{Stability of Cascaded System}
In view of \eqref{eqn_model}, \eqref{eqn_thrust}, \eqref{eqn_exCL}, \eqref{eqn_outerloop_input}, and \eqref{eqn_hybrid_input}, the overall closed-loop system can be represented as the following cascaded system:
\begin{subequations}\label{eqn_cascaded}
    \begin{align}
        &\dot{x}_{out}=F_{out}(t,x_{out},x_{in})\label{eqn_cascaded_out}\\
        &\underbrace{\dot{x}_{in}=F_{in}(x_{in})
        \quad\quad\quad~~}_{{x_{in}\in\mathcal{F}_{in}}}\underbrace{
            x_{in}^+=G_{in}(x_{in})
        }_{x_{in}\in\mathcal{J}_{in}}\label{eqn_cascaded_in}
    \end{align}     
\end{subequations}
where $F_{in}$ and $G_{in}$ are given in \eqref{eqn_hybrid_inner_closed_loop}, and the function $F_{out}$ is given by 
\begin{align*}
    F_{out}(t,x_{out},x_{in}) = \tilde{F}_{out}(t,x_{out})+T\Lambda R\left(I_3-\Phi(x_{in})\right)e_3
\end{align*}
with $\Lambda=[0_{3\times 3},I_3,0_{10\times 3}\T]\T$, $\tilde{F}_{out}$ given in \eqref{eqn_out_closed}, and $\Phi(\cdot):\mathcal{X}_{in}\to\SO(3)$ denoting the projection
\begin{align*}
    \Phi(x_{in})=\tilde{R},\quad\forall x_{in}=(\tilde{R},\theta,\tilde{\omega},R_d,\omega_d)\in\mathcal{X}_{in}.
\end{align*}

\begin{thm}
\label{thm_final}
Consider the cascaded closed-loop system \eqref{eqn_cascaded}. Let
\begin{align*}
    \mathcal{A}:=\{(x_{out},x_{in})\in\mathbb{R}^{12}\times\mathcal{X}_{in}:x_{out}=0,\;x_{in}\in\mathcal{A}_{in}\},
\end{align*}
where $\mathcal{A}_{in}$ is defined in Theorem \ref{thm_4}. Then, the set $\mathcal{A}$ is USGAS for the system \eqref{eqn_cascaded}.
\end{thm}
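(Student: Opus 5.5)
The plan is a standard cascade argument. The inner loop is autonomous and, by Theorem \ref{thm_4} and \eqref{eqn_in_exp}, converges exponentially on compacts. The outer loop is the UGAS nominal system \eqref{eqn_out_closed} driven by an interconnection term that vanishes as $\tilde R\to I_3$. We fix an arbitrary compact set $\Omega_c\times\mathcal{W}_d$ of inner initial conditions and an arbitrary ball of outer initial conditions. We then construct a class-$\mathcal{KL}$ bound on $|(x_{out},x_{in})|_{\mathcal{A}}$ valid on this set, uniformly in the initial time.

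\textbf{Step 1: bound the interconnection term.} By Lemma \ref{lem_gamma_alpha}, $|\underline{\mu_d}^i(t)|\le\underline{\mathcal{B}}$ for all $t$. Hence the thrust \eqref{eqn_thrust} satisfies $0<T\le T_{\max}$ for all $t\ge0$, independently of the state. Since $\|R\|_2=1$, this gives
\begin{align*}
\|T\Lambda R(I_3-\tilde R)e_3\|\le T_{\max}\|I_3-\tilde R\|\le c\,|x_{in}|_{\mathcal{A}_{in}} .
\end{align*}
Substituting the exponential bound \eqref{eqn_in_exp}, the perturbation is bounded by $T_{\max}ck'e^{-\lambda' t}|x_{in}(0)|_{\mathcal{A}_{in}}$. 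It is therefore bounded and integrable in $t$, with constants that depend only on $\Omega_c$.

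\textbf{Step 2: handle the sampled MPC feedback.} This is the main obstacle. The outer loop is a sampled-data system, so the perturbation also enters the sampled states $x_{out}(kh)$ on which the MPC acts. We cannot simply invoke a continuous-time converse Lyapunov theorem for \eqref{eqn_out_closed}. The first approach is to work at the sampling instants. Write the per-axis discrete dynamics as $x_{k+1}=\underline{A}x_k+\underline{B}u_{\MPC}(x_k)+w_k$, where
\begin{align*}
w_k=\int_{kh}^{(k+1)h}e^{\underline{A}_0((k+1)h-s)}(\text{perturbation})\,ds ,
\end{align*}
so that $\|w_k\|\le c' e^{-\lambda' kh}$. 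The optimal value function $V_N^*$ from the proof of Theorem \ref{thm_1} serves as a Lyapunov function. It satisfies a decrease $V_N^*(x_{k+1}^{nom})-V_N^*(x_k)\le-\ell(x_k)$ and is continuous, since the problem is a convex QP-like program. On compact sets it is therefore uniformly continuous, which gives the increment bound $V_N^*(x_{k+1})\le V_N^*(x_{k+1}^{nom})+\kappa_S(\|w_k\|)$ on any bounded set $S$. Once boundedness is known, this yields an ISS-type estimate
\begin{align*}
\|x_k\|\le\beta_0(\|x_0\|,k)+\kappa\bigl(\sup_j\|w_j\|\bigr)
\end{align*}
on $S$, with asymptotic gain zero because $w_k\to0$.

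\textbf{Step 3: rule out finite escape and establish boundedness.} The perturbation is uniformly bounded by $2T_{\max}$ and the MPC input obeys $|u_k|\le\Delta_k$, which is bounded. The outer dynamics are linear, with the marginal mode $\hat x_2$ driven by a bounded input. A Gronwall estimate over the finite interval on which the perturbation is not yet small therefore gives $\|x_{out}(t)\|\le \rho(\|x_{out}(0)\|,\Omega_c)$ on $[0,t^*]$. After $t^*$, the accumulated $\sum_k\|w_k\|$ is arbitrarily small. Here $\Psi_\Delta$ grows linearly, so the decrease of $V_N^*$ in the marginal direction is at least linear, and summable disturbances cannot push $\hat x_2$ to infinity. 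Together these give bounded trajectories from the chosen initial set.

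\textbf{Step 4: assemble the estimate.} Combining Steps 1–3 gives $x_k\to0$ with a uniform $\mathcal{KL}$ estimate on the compact initial set. The inter-sample bound \eqref{eqn_continuous_leq} together with Proposition \ref{prop_1} extends this to continuous time, as in Theorem \ref{thm_2}, with an extra additive term from the integrated perturbation. Combining with \eqref{eqn_in_exp} yields
\begin{align*}
|(x_{out},x_{in})(t)|_{\mathcal{A}}\le\beta_{\Omega}\bigl(|(x_{out},x_{in})(0)|_{\mathcal{A}},t\bigr).
\end{align*}
The estimates are independent of the initial time, since the system is time-varying only through $r$ and its derivatives are uniformly bounded, so the result is uniform. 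Because the compact set is arbitrary, $\mathcal{A}$ is USGAS. Semiglobality enters only through the constants $k',\lambda'$ of Theorem \ref{thm_4}.
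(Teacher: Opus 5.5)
Your argument is sound in outline, but it follows a different route from the paper after the first step.

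The paper bounds the interconnection term as in your Step~1 (constant $2\sqrt{2}T_{\max}$ via $|\tilde R|_{I_3}$). It then gets boundedness of each outer axis from the convolution estimate $\|x(t)\|\le\|\check x(t)\|+|x_{in}(0)|_{\mathcal{A}_{in}}\int_0^t\|e^{A_0(t-\tau)}\Lambda_0\|\phi(\tau)\,d\tau$. The integral is evaluated through the Jordan form; the zero eigenvalue makes it bounded by $c_5$ but not decaying. With boundedness, UGAS of the outer loop (Theorem~\ref{thm_3}) and SGES of the inner loop (Theorem~\ref{thm_4}), the paper hands convergence and uniformity to the cascade result \cite[Theorem 4]{andrien2024model}.

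You never use an external cascade theorem. You work on the sampled states and use the MPC value function as an ISS-Lyapunov function on bounded sets. Boundedness comes from the at-least-linear decrease (through $\Psi_\Delta$) dominating the $O\bigl((1+\|x_k\|)\|w_k\|\bigr)$ increment once $w_k$ is small, and you assemble the $\mathcal{KL}$ bound yourself.

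What each buys:
\begin{itemize}
\item The paper's route is much shorter and gives boundedness globally in $x_{out}$ with an explicit constant.
\item Yours is more rigorous exactly at the point you flag in Step~2. The paper superposes the nominal MPC trajectory $\check x$ and a forced response, which is exact only if both trajectories receive the same inputs. Here the MPC acts on perturbed samples, so they do not. Your Lyapunov argument is what actually justifies the boundedness the paper asserts.
\end{itemize}

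To close your sketch, three points need stating:
\begin{itemize}
\item The value function depends on $k$ through $\Delta_{k+i}$, so you need its continuity uniformly in $k$. In fact a local Lipschitz bound $c(1+\|x\|)$ holds uniformly in $k$. This is because the input set is compact, independent of $x$, and contained in $[-\bar\Delta,\bar\Delta]^N$ with $\bar\Delta:=\sup_k\Delta_k$. Convexity of the program is not what gives it.
\item The shifted candidate sequence stays feasible under the time-varying constraints, since $|\kappa|\le\Delta\le\Delta_{k+N}$.
\item Boundedness in Step~3 rests on that Lipschitz growth matched against the linear decrease. Summability of $w_k$ alone does not give it.
\end{itemize}
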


\begin{proof}
The proof proceeds by first establishing the boundedness of the solutions to \eqref{eqn_cascaded}. Note that the outer-loop subsystem \eqref{eqn_cascaded_out} remains decoupled along the x-,y-, and z-axes. For each axis, one has:
\begin{align}\label{eqn_cascaded_closed_outerloop}
    \dot{x}=\tilde{f}_{out}(t,x)+\Lambda_0\Pi\left(TR(I_3-\Phi(x_{in}))e_3\right)
\end{align}
where $\Lambda_0=[0,1,0,0]\T$, $\tilde{f}_{out}(t,x)$ is given by \eqref{eqn_lsys}, and $\Pi:\mathbb{R}^3\to\mathbb{R}$ denotes the canonical projection onto the $i-$th component (with $i$ omitted). By the definition of $\Pi(\cdot)$, one has $\vert\Pi(y)\vert  \leq \|y\|$, for any $y\in\mathbb{R}^3$. In view of \eqref{eqn_in_exp}, for every compact set $\Omega_c\subset \SO(3)\times\mathbb{R}\times \mathbb{R}^3$ and every initial condition $x_{in}(0)\in\Omega_c \times \mathcal{W}_d$, there exist $k',\lambda'>0$, such that 
\begin{align*}
    &\vert \Pi\left(TR(I_3-\Phi(x_{in}))e_3\right)\vert\leq \|TR(I_3-\Phi(x_{in}))e_3\|\\
    \leq& 2\sqrt{2}T_{\max}\vert \tilde{R}\vert _{I_3}\leq 2\sqrt{2} T_{\max} \vert x_{in}\vert _{\mathcal{A}_{in}}\\
    \leq& 2\sqrt{2}k'T _{\max} \exp(-\lambda't)\vert x_{in}(0)\vert _{\mathcal{A}_{in}}\\
    :=&\phi(t)\vert x_{in}(0)\vert _{\mathcal{A}_{in}}
\end{align*}
where $\vert \tilde{R}\vert _{I_3}$ is the normalized Euclidean distance on $\SO(3)$ with respect ot the identity $I_3$, which is given by $\vert \tilde{R}\vert_{I_3} = \frac{1}{2}\sqrt{\tr(I_3-\tilde{R})}$. Denote $\check{x}(t)$ the solution of $\dot{\check{x}}=f_{out}(t,\check{x})$. Then, for the solution of \eqref{eqn_cascaded_closed_outerloop}, we have 
\begin{align}\label{eqn_V.outloop}
    \|x(t)\|&\leq\|\check{x}(t)\|\notag\\
    &+\vert x_{in}(0)\vert _{\mathcal{A}_{in}}\int_0^t\|\exp(A_0(t-\tau))\Lambda_0\|\phi(\tau)d\tau .
\end{align}
To handle the second term in \eqref{eqn_V.outloop}, we estimate the corresponding expression as follows:
\begin{align}\label{eqn_V.further_calculate}
    &\int_0^t\|\exp(A_0(t-\tau))\Lambda_0\|\phi(\tau)d\tau\notag\\
    =&\|P\|\int_0^t\|\exp(J_0(t-\tau))P^{-1} \Lambda_0\|\phi(\tau) d\tau\notag\\
    \leq & \underbrace{\frac{2\sqrt{2}\|P\|k'T_{\max}}{d}}_{:=c_4}\int_0^t (\exp(-d(t-\tau))+1)\exp(-\lambda'\tau)d\tau\notag\\
    =&\left\{
    \begin{aligned}
         &\frac{c_4}{d-\lambda'}\left(e^{-\lambda't}-e^{-dt}\right)+\frac{c_4}{\lambda'}\left(1-e^{-\lambda' t}\right),&d\neq \lambda'\\
         &c_4te^{-dt}+\frac{c_4}{\lambda'}\left(1-e^{-\lambda't}\right),&d=\lambda'
    \end{aligned}
    \right.\notag\\
    \leq&\left\{
    \begin{aligned}
    &\frac{c_4}{d-\lambda'}\left(e^{-\lambda'\frac{\ln d-\ln \lambda'}{d-\lambda'}}-e^{-d\frac{\ln d-\ln\lambda'}{d-\lambda '}}\right)+\frac{c_4}{\lambda'},&d\neq \lambda '\\
    &\frac{c_4 e^{-1}}{d}+\frac{c_4}{\lambda'},&d=\lambda'
    \end{aligned}
    \right.\notag\\
    &:=c_5
\end{align}
where the first inequality follows from 
\begin{align*}
    \exp(J_0(t-\tau))P^{-1}\Lambda_0 = \frac{1}{d}[-e^{-d(t-\tau)},0,0,1]\T,
\end{align*}
and the last inequality is obtained by upper bounding the first and second terms in the preceding expression separately. Combining \eqref{eqn_kl_function_outer}, \eqref{eqn_V.outloop}, and \eqref{eqn_V.further_calculate}, one has that the solution of \eqref{eqn_cascaded_closed_outerloop} satisfies
\begin{align*}
    \|x(t)\|&\leq \beta(\|x(0)\|,t)+c_5 \vert x_{in}(0)\vert _{\mathcal{A}_{in}}.
\end{align*}
Hence, for every initial condition $x_{out}\in\mathbb{R}^{12},~x_{in}\in \Omega_c\times\mathcal{W}_d$, the solution of the outer-loop system \eqref{eqn_cascaded_out} is bounded. Together with the boundedness of the solution of the inner-loop subsystem \eqref{eqn_cascaded_in} established in Theorem \ref{thm_4}, the solution of the overall cascaded system \eqref{eqn_cascaded} is bounded. Moreover, by Theorem \ref{thm_3}, the equilibrium $x_{out}=0$ of the outer-loop subsystem \eqref{eqn_out_closed} is UGAS, and by Theorem \ref{thm_4}, the set $\mathcal{A}_{in}$ is SGES for the inner-loop subsystem \eqref{eqn_cascaded_in}. By applying \cite[Theorem 4]{andrien2024model}, the set $\mathcal{A}$ is USGAS for the cascaded system \eqref{eqn_cascaded}.
\end{proof}
\begin{rem}
    The jumps of the hybrid variable $\theta$ in the inner-loop subsystem \eqref{eqn_cascaded_in} do not induce jumps in the outer-loop state $x_{out}$. This is because the outer-loop dynamics evolve continuously and are affected by the inner-loop only through the attitude tracking error during flows. Moreover, by Theorem \ref{thm_4}, the set $\mathcal{A}$ is SGES for the inner-loop subsystem \eqref{eqn_cascaded_in}. In particular, along every maximal solution, there exists a Lyapunov function for the inner-loop subsystem that decreases exponentially during flows and undergoes a decrease of at least a fixed positive amount at each jump. Hence, there exists a finite time $t^\star \geq 0$ such that no further jumps occur for $t \geq t^\star$. Therefore, from time $t^\star$, the overall cascaded system \eqref{eqn_cascaded} evolves as a continuous-time cascaded system. By taking $(x_{out}(t^\star),x_{in}(t^\star))$ as a new initial condition, one can apply \cite[Theorem 4]{andrien2024model} to conclude convergence to $\mathcal{A}$.
\end{rem}
\begin{rem}
The semi-global property of the overall closed-loop system stems from the fact that, in our proof, the initial angular velocity $\omega(0)$ is required to belong to an arbitrary compact subset of $\mathbb{R}^3$, with no restrictions on the initial attitude, \textit{i.e.,} $R(0) \in \SO(3)$. 
\end{rem}

\begin{rem}
The stability of the overall closed-loop system \eqref{eqn_cascaded} is independent of the initial conditions of $\mu_d$ and $\eta$. The constraints $\vert \underline{\mu_d}(0)\vert \leq \Delta_0,\vert\underline{\eta}(0)\vert\leq \Delta_0$ are imposed only to avoid the singularity associated with the extraction of the desired attitude. Since $\mu_d$ and $\eta$ are auxiliary states introduced through dynamic extension rather than physical states of the UAV, these constraints do not impose any additional requirement on the actual initial state of the system.
\end{rem}

\section{Simulation Results}
In this section, simulation results are presented to illustrate the performance of the proposed control framework composed of the MPC-based outer-loop controller and the hybrid geometric inner-loop controller. The MPC formulation is solved online in MATLAB using CasADi \cite{andersson2018casadi}. For the implementation of the hybrid controller \eqref{eqn_hybrid_input}, a potential function on $\SO(3)$ needs to be specified. In the simulations, we choose the potential function $U(R,\theta) = \tr(A(I-\mathcal{T}(R,\theta)))+\frac{\gamma_\theta}{2}\theta^2$, where the detailed construction of $\mathcal{T}(R,\theta)$ and the choice of $\gamma_\theta$ are omitted here for brevity; see \cite{Wang_Tayebi2022} for details. The proposed MPC strategy \eqref{eqn_MPClaw} and hybrid controller \eqref{eqn_hybrid_input} in Section \ref{sec_III} are referred to as ``MPC+Hybrid''. For comparison, we also consider the same MPC strategy \eqref{eqn_MPClaw} combined with the following classical non-hybrid controller, referred to as "MPC+Non-Hybrid":
\begin{align*}
    \tau = \Upsilon(\tilde{R},\omega_d,\dot{\omega}_d)-2k_R\psi(A\tilde{R}) - k_\omega \tilde{\omega}
\end{align*}
which is modified from the hybrid controller \eqref{eqn_hybrid_input} by taking $\theta \equiv 0$. The initial conditions are chosen as $p(0)=0,v(0)=0,\mu_d(0)=0,\eta(0)=0,\theta(0,0)=0,\omega(0)=0$ and $R(0)=R_d(0)\mathcal{R}(\pi,e_1)$ with $e_1=[1,0,0]\T,\mathcal{R}(\pi,e_1)=I_3+\sin(\pi)e_1^\times +(1-\cos(\pi))(e_1^\times)^2$.

The parameters used in the simulations are selected as follows. The inertia matrix is taken as $J=\mathrm{diag}(0.0159,0.0150,0.0297),$ and the gravitational acceleration is $g=9.81$. The maximum thrust-to-mass ratio is chosen as $T_{\max}=25$. The reference trajectory is given by:
\begin{align*}
    r(t) = \begin{bmatrix}
        3\sin(2t) & 3\cos(2t) & 8+4\cos(t)
    \end{bmatrix}\T,\psi_d(t) = 0.5t.
\end{align*}
For the outer-loop controller, the sampling period of the MPC is selected as $h=0.05$, and the prediction horizon is chosen as $N=25$. The parameter $\gamma$ and the scale factor $\alpha$ in \eqref{eqn_exCL} are chosen as follows: $\gamma$ is chosen as $0.9$ times its upper bound, and $\alpha$ is chosen at its upper bound. The function $\Psi_\Delta(\cdot)$ in stage cost \eqref{eqn_cost} is chosen as $\Psi_\Delta(x)= \ln\cosh(x)$, which is the integral of the saturation function $\sigma_\Delta(x)=\tanh(x)$. The positive definite matrix $Q$ in \eqref{eqn_LMI} is chosen as $Q = I_3$. The parameter $\Theta$ in \eqref{eqn_terminal} is chosen as $1.1$ times the corresponding lower bound required in \eqref{eqn_eqn_MPC_para_theta}. For the inner-loop controller, the control gains are selected as $k_R=1.5$, $k_\omega=0.2$, and $k_\theta=10$. The matrix $A$ in the potential function is chosen as $A=\mathrm{diag}(2,4,6),$ and the remaining parameters in the hybrid mechanism are selected as $\Theta_h=\{0.9\pi\},\gamma_\theta = \frac{7}{\pi^2},\delta = 0.324$.

\begin{figure}
    \centering
    \includegraphics[width=0.9\linewidth]{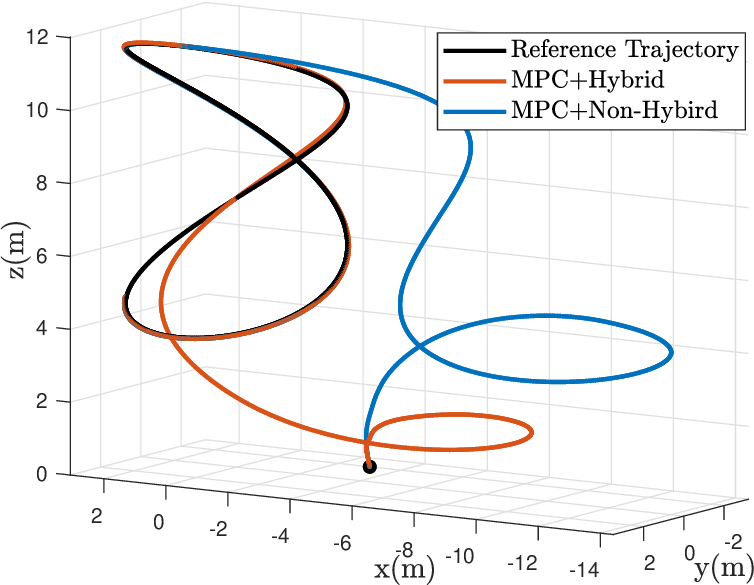}
    \caption{3D plot of the reference and actual trajectories, where the black dot indicates the starting point of the actual trajectory.}
    \label{fig_1}
\end{figure}
\begin{figure}
    \centering
    \includegraphics[width=0.95\linewidth]{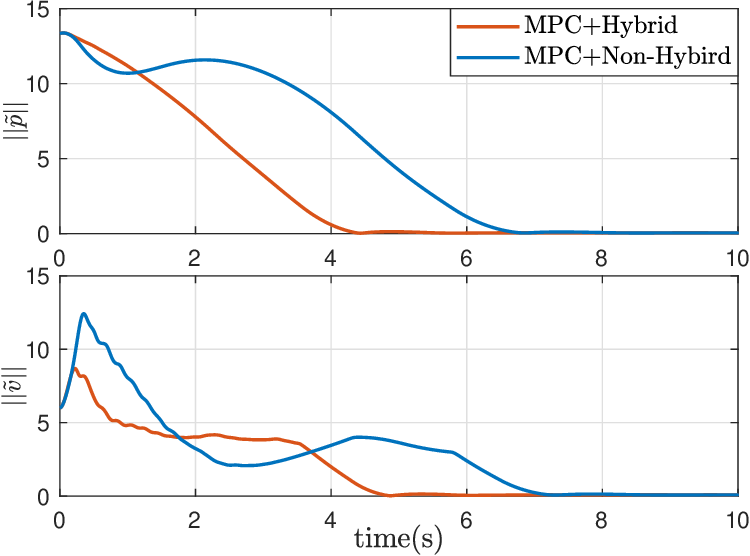}
    \caption{Tracking errors.}
    \label{fig_2}
\end{figure}
\begin{figure}
    \centering
    \includegraphics[width=0.9\linewidth]{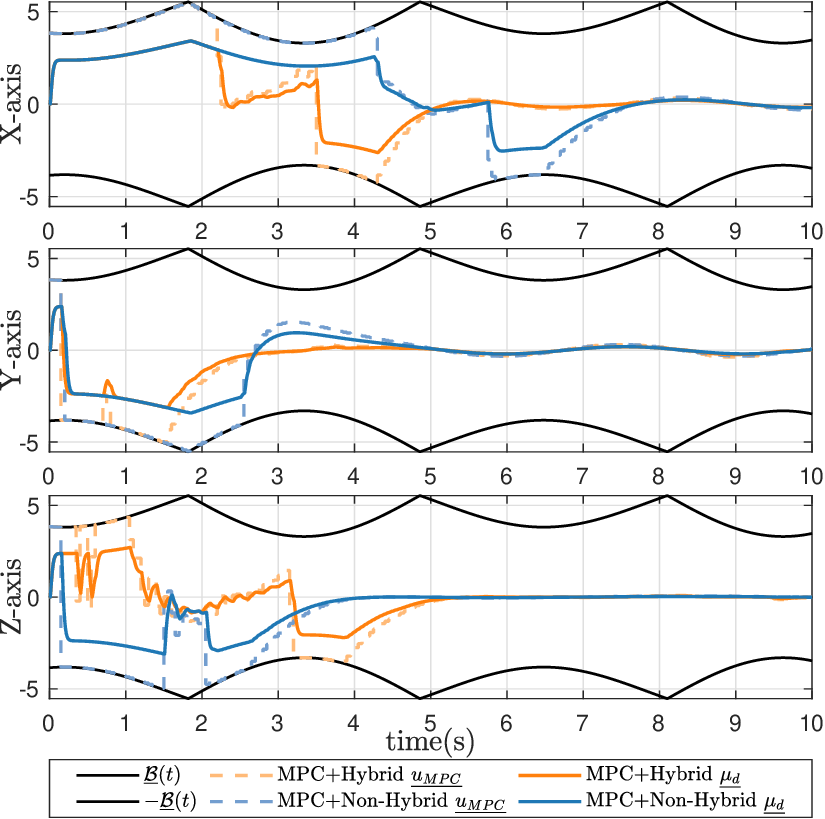}
    \caption{Desired accelerations $\mu_d$ and MPC input $u_{\MPC}$.}
    \label{fig_3}
\end{figure}
The reference and resulting position trajectories are shown in Fig.~\ref{fig_1}, which shows that MPC+Hybrid exhibits better tracking performance than MPC+Non-Hybrid. A similar result can also be observed in Fig.~\ref{fig_2}. Fig.~\ref{fig_3} illustrates the virtual input $\mu_d$ and the MPC input $u_{\mathrm{MPC}}$ along the $x$-, $y$-, and $z$-axes under two different control strategies, which shows that the desired acceleration $\mu_d$ always satisfies the constraints. Meanwhile, due to the scaling factor $\alpha$, the virtual input $\mu_d$ maintains a certain margin from the constraint boundary.

\section{Conclusions And Future Work}
This paper addresses the trajectory tracking problem of quadrotor UAVs and develops an MPC-based hierarchical control framework. The designed outer-loop MPC guarantees UGAS of the outer-loop system in continuous-time sense. Furthermore, we prove that the cascade of the outer-loop controller with the hybrid inner-loop controller is USGAS. Moreover, the introduction of the scale factor allows the constraints on the virtual input to be directly applied to the augmented virtual input. Simulation results show that the proposed framework achieves satisfactory trajectory tracking performance without imposing restrictions on the initial attitude, and yields better transient responses than its counterpart with a non-hybrid inner-loop controller.

Future work will focus on modeling the outer-loop subsystem as a hybrid system and developing a corresponding hybrid MPC formulation. This is expected to provide a unified framework integrating the outer-loop MPC and the hybrid geometric inner-loop controller, and to facilitate a deeper analysis of the overall closed-loop stability properties.

\appendix
\subsection{Proof of Lemma 1} \label{adx_lem1}
    Proof of item 1.    
    Since $0<\sigma_\Delta'(x)\leq 1$ for all $x\in\mathbb{R}$, it follows that $ax-\sigma_\Delta(ax)\geq 0$ when $x\geq 0$ and $ax-\sigma_\Delta(ax)\leq 0$ when $x\leq 0$. Consequently, $(ax-\sigma_\Delta(ax))f(x)\geq 0$ for all $x\in\mathbb{R}$, which yields $\sigma_\Delta(ax)f(x)\leq axf(x)$.

    Proof of item 2. Define $\rho_\Delta(x)=\sigma_\Delta(bx)-b\sigma_\Delta(x)$. Then, the  derivative $\rho_\Delta '(x)=b\left(\sigma_\Delta'(bx)-\sigma_\Delta '(x)\right)$ and $\rho_\Delta(0)=0$. Given $b\in[0,1]$ and the monotinicity of $\sigma_\Delta'$ (nonincreasing over $(0,+\infty)$, nondecreasing over $(-\infty,0)$), we have $\rho_\Delta'(x)\geq 0$ for all $x\in\mathbb{R}$. Hence, $\sigma_\Delta(bx)\geq b\sigma_\Delta(x)$ for $x\geq 0$ and $\sigma_\Delta(bx)\leq b\sigma_\Delta(x)$ for $x\leq 0$, implying $x\sigma_\Delta(bx)\geq bx\sigma_\Delta(x)$ for all $x\in\mathbb{R}$.

    Proof of item 3. By the Mean Value Theorem for integrals, $\Psi_\Delta(x)=\int_0^x\sigma_\Delta(s)ds=x\sigma_\Delta(\theta x)$ for some $\theta\in(0,1)$. Applying inequality 1) with $a=\theta$, we have $\Psi_\Delta(x)\leq \theta x\sigma_\Delta(x)\leq x\sigma_\Delta(x)$ for all $x\in\mathbb{R}$.
\subsection{The expressions of $\omega_d$ and $\dot{\omega}_d$}\label{apx_omega_d}
From the expression of the desired attitude \eqref{eqn_desired_attitude}, the desired angular velocity $\omega_d$ and angular acceleration $\dot{\omega}_d$ are derived by first computing $\dot{R}_d$ and $\ddot{R}_d$, which are given by
\begin{subequations}\label{eqn_dot_ddot_Rd}
    \begin{align}
        \dot{R}_d &= [\dot{\mathbf{x}}_d,\dot{\mathbf{z}}_d \times \mathbf{x}_d + \mathbf{z}_d\times \dot{\mathbf{x}}_d, \dot{\mathbf{z}}_d]\\
        \ddot{R}_d & = [\ddot{\mathbf{x}}_d,\ddot{\mathbf{z}}_d \times \mathbf{x}_d + 2 \dot{\mathbf{z}}_d\times\dot{\mathbf{x}}_d + \mathbf{z}_d\times\ddot{\mathbf{x}}_d, \ddot{\mathbf{z}}_d].
    \end{align}    
\end{subequations}

The expressions of $\dot{\mathbf{x}}_d$, $\dot{\mathbf{z}}_d$, 
$\ddot{\mathbf{x}}_d$, and $\ddot{\mathbf{z}}_d$ are derived as follows. In view of \eqref{eqn_zd}, let $\nu_z = ge_3 - D\dot{r} - \ddot{r} - \mu_d$, so that $\mathbf{z}_d = \nu_z / \|\nu_z\|$. Its first- and second-order time derivatives are given by
\begin{subequations}\label{eqn_dot_ddot_zd}
    \begin{align}
        \dot{\mathbf{z}}_d&=\frac{\|\nu_z\|^2 \dot{\nu}_z -\nu_z\nu_z\T \dot{\nu}_z}{\|\nu_z\|^3}=\frac{(I_3 -\mathbf{z}_d\mathbf{z}_d\T)\dot{\nu}_z}{\|\nu_z\|} = \frac{ \pi_z\dot{\nu}_z}{\|\nu_z\|}\\
        \ddot{\mathbf{z}}_d &= \frac{1}{\|\nu_z\|}\left(\dot{\pi}_z\dot{\nu}_z + \pi_z\ddot{\nu}_z-\mathbf{z}_d\T \dot{\nu}_z\dot{\mathbf{z}}_d\right)
    \end{align}
\end{subequations}
where 
\begin{subequations}\label{eqn_dot_ddot_nuz}
    \begin{align}
        \dot{\pi}_z&=-\dot{\mathbf{z}}_d\mathbf{z}_d\T -\mathbf{z}_d\dot{\mathbf{z}}_d\T\\
        \dot{\nu}_z
        &= -D\ddot{r} - r^{(3)} + \frac{1}{\gamma} \mu_d -\frac{\alpha}{\gamma}\eta \\
        \ddot{\nu}_z &= -D r^{(3)} - r^{(4)} - \frac{1}{\gamma^2} \mu_d + \frac{2\alpha}{\gamma^2}\eta -\frac{\alpha^2}{\gamma^2}u.
    \end{align}   
\end{subequations}

From \eqref{eqn_xd}, let $\nu_x = \pi_z \mathbf{x}_b$, so that $\mathbf{x}_d = \nu_x / \|\nu_x\|$. Similarly to $\mathbf{z}_d$, its first- and second-order time derivatives are given by
\begin{subequations}
    \begin{align}
        \dot{\mathbf{x}}_d & = \frac{\pi_x \dot{\nu}_x}{\|\nu_x\|},\\
        \ddot{\mathbf{x}}_d & = \frac{1}{\|\nu_x\|}\left(\dot{\pi}_x \dot{\nu}_x+ \pi_x \ddot{\nu}_x -\mathbf{x}_d\T \dot{\nu}_x \dot{\mathbf{x}}_d\right)
    \end{align}
\end{subequations}
where 
\begin{subequations}
    \begin{align}
        \pi_x &= I_3 - \mathbf{x}_d\mathbf{x}_d\T \\
        \dot{\pi}_x & = -\dot{\mathbf{x}}_d \mathbf{x}_d\T - \mathbf{x}_d \dot{\mathbf{x}}_d\T\\
        \dot{\nu}_x &= \dot{\pi}_z \mathbf{x}_b +\pi_z \dot{\mathbf{x}}_b\\
        \ddot{\nu}_x & =\ddot{\pi}_z \mathbf{x}_b + 2 \dot{\pi}_z \dot{\mathbf{x}}_b + \pi_z \ddot{\mathbf{x}}_b  \\
        \ddot{\pi}_z& = - \ddot{\mathbf{z}}_d \mathbf{z}_d\T - 2 \dot{\mathbf{z}}_d\dot{\mathbf{z}}_d\T - \mathbf{z}_d \ddot{\mathbf{z}}_d\T .
    \end{align}
\end{subequations}

In view of \eqref{eqn_xb}, the first- and second-order time derivatives are given by
\begin{subequations}\label{eqn_dot_ddot_xb}
    \begin{align}
        \quad\dot{\mathbf{x}}_b &=\dot{\psi}_d\begin{bmatrix}
            -\sin\psi_d \\ \cos\psi_d \\ 0
        \end{bmatrix}\\
        \ddot{\mathbf{x}}_b &= \ddot{\psi}_d \begin{bmatrix}
            -\sin\psi_d \\ \cos\psi_d \\ 0
        \end{bmatrix} + (\dot{\psi}_d)^2 \begin{bmatrix}
            -\cos\psi_d \\ -\sin\psi_d \\ 0
        \end{bmatrix}.
    \end{align}   
\end{subequations}

Finally, from the kinematics of $R_d$, \ie, $\dot{R}_d=R_d\omega_d^\times$, the desired angular velocity $\omega_d$ and angular acceleration $\dot{\omega}_d$ are given by
\begin{subequations}
    \begin{align}
        \omega_d & =\mathrm{vec}\left(R_d\T \dot{R}_d\right)\\
        \dot{\omega}_d & =\mathrm{vec}\left(\dot{R}_d\T \dot{R}_d + R_d\T \ddot{R}_d\right)
    \end{align}
\end{subequations}
where $R_d$ is given by \eqref{eqn_desired_attitude}, $\dot{R}_d$ and $\ddot{R}_d$ are given by \eqref{eqn_thrust}-\eqref{eqn_xd} and \eqref{eqn_dot_ddot_Rd}-\eqref{eqn_dot_ddot_xb}.
\subsection{Proof of Lemma 3}\label{apx_lem3}
According to Assumption \ref{assum_1}, the first and second derivatives $\dot{r}(t)$ and $\ddot{r}(t)$ are Lipschitz continuous, that is, there exists known constants $L_1=\sup_{t\geq 0}\|\ddot{r}(t)\|$ and $L_2=\sup_{t\geq 0}\|r^{(3)}(t)\|$, such that
\begin{align*}\label{eqn_Lipschitz}
    \| \dot{r}(t_1)-\dot{r}(t)\| \leq L_1\vert t_1-t_2\vert \\
    \| \ddot{r}(t_1)-\ddot{r}(t_2)\| \leq L_2\vert t_1-t_2\vert 
\end{align*}
for every $t_1,t_2>0$. Consider $\mathcal{B}_1(\dot{r},\ddot{r})$, one has
\begin{align*}
    &\vert \mathcal{B}_1(\dot{r}(t_1),\ddot{r}(t_1))-\mathcal{B}_1(\dot{r}(t_2),\ddot{r}(t_2))\vert \\
    =&\left\vert e_3\T\left(-D\dot{r}(t_1)-\ddot{r}(t_1)+D\dot{r}(t_2)+\ddot{r}(t_2)\right)\right\vert\\
    \leq&\|\ddot{r}(t_1)-\ddot{r}(t_2)\|+\|D\|_2\cdot\|\dot{r}(t_1)-\dot{r}(t_2)\|\\
    \leq&\sqrt{3}\bar{L}\vert t_1-t_2\vert
\end{align*}
with $\bar{L}=\left(\bar{d}L_1+L_2\right)/\sqrt{3}$, $\bar{d} = \max \{d_1,d_2,d_3\}$,
    which implies the constraint $\mathcal{B}_1(\dot{r},\ddot{r})$ is Lipschitz continuous for time $t$. Consider $\mathcal{B}_2(\dot{r},\ddot{r})$, one has
\begin{align*}
    &\vert \mathcal{B}_2(\dot{r}(t_1),\ddot{r}(t_1))-\mathcal{B}_2(\dot{r}(t_2),\ddot{r}(t_2))\vert \\
    =&\left\vert \|ge_3-\ddot{r}(t_1)-D\dot{r}(t_1)\|-\|ge_3-\ddot{r}(t_2)-D\dot{r}(t_2)\|\right\vert\\
    \leq& \| -\ddot{r}(t_1) + \ddot{r}_2(t_2) - D(\dot{r}({t}_1) - \dot{r}(t_2))\|\\
    \leq&\|\ddot{r}(t_1)-\ddot{r}(t_2)\|+\|D\|_2\cdot\|\dot{r}(t_1)-\dot{r}(t_2)\|\\
    \leq&\sqrt{3}\bar{L}\vert t_1-t_2\vert
\end{align*}
Since the minimum operator is non-expansive under the infinity norm, we have
\begin{align*}
    &\vert \mathcal{B}(\dot{r}(t_1),\ddot{r}(t_1))-\mathcal{B}(\dot{r}(t_2),\ddot{r}(t_2))\vert\\
    =&\vert \min\{\mathcal{B}_1(\ddot{r}(t_1)),\mathcal{B}_2(\ddot{r}(t_1))\} - \min\{\mathcal{B}_1(\ddot{r}(t_2)),\mathcal{B}_2(\ddot{r}(t_2))\}\vert\\
    \leq&\max\{\vert \mathcal{B}_1(\ddot{r}(t_1))-\mathcal{B}_1(\ddot{r}(t_2))\vert,\vert \mathcal{B}_2(\ddot{r}(t_1))-\mathcal{B}_2(\ddot{r}(t_2))\vert \}\\
    \leq& \sqrt{3}\bar{L}\vert t_1-t_2\vert
\end{align*}

In view of \eqref{eqn_constraint}, one has
\begin{align*}
    \vert \underline{\mathcal{B}}(\dot{r}(t_1),\ddot{r}(t_1))-\underline{\mathcal{B}}(\dot{r}(t_2),\ddot{r}(t_2))\vert\leq \bar{L}\vert t_1-t_2\vert .
\end{align*}
This completes the proof.

\subsection{Proof of Lemma 4}\label{apx_lem4}
From $0<\gamma<\Delta/\bar{L}$, we have
\begin{align*}
    e^{-h/\gamma}<e^{-\bar{L}h/\Delta}<e^{-\ln (1+\bar{L}h/\Delta)}=\beta
\end{align*}
Therefore, $\frac{\beta-e^{-h/\gamma}}{1-e^{-h/\gamma}}>0$, which implies that the scale factor $\alpha$ is feasible. In view of \eqref{eqn_separate_system_4}, the evolution of the jerk of the system \eqref{eqn_separate_system} during one sampling period can be given by
\begin{align*}
    \underline{\eta}(t)&=e^{-(t-t_k)/\gamma}\underline{\eta}(t_k)+\left(1-e^{-(t-t_k)/\gamma}\right)\alpha \underline{u}_k
\end{align*}
for $t\in[t_k,t_{k+1}]$. On one hand, by the fact that $0< e^{-(t-t_k)/\gamma}\leq 1$ and $\alpha<1$, we have
\begin{align*}
    \vert \underline{\eta}(t)\vert&\leq e^{-(t-t_k)/\gamma}\vert \underline{\eta}(t_k)\vert +\left(1-e^{-(t-t_k)/\gamma}\right)\vert \underline{u}_k\vert\\
    &\leq e^{-(t-t_k)/\gamma}\Delta_k +\left(1-e^{-(t-t_k)/\gamma}\right)\Delta_k\\
    &=\Delta_k.
\end{align*}
On the other hand, we have
\begin{align*}
    \frac{\Delta_{k+1}}{\Delta_k}&\geq \frac{\Delta_{k+1}}{\Delta_{k+1}+\bar{L}h}\geq \frac{\Delta}{\Delta +\bar{L}h}=\beta.
\end{align*}
Therefore,
\begin{align*}
    \vert \underline{\eta}(t_{k+1})\vert &=\left\vert e^{-h/\gamma}\underline{\eta}(t_k)+\left(1-e^{-h/\gamma}\right)\alpha \underline{u}_k\right\vert\\
    &\leq e^{-h/\gamma} \vert \underline{\eta}(t_k)\vert +\left(\beta-e^{-h/\gamma}\right)\vert \underline{u}_k\vert\\
    &\leq \beta \Delta_k\leq \Delta_{k+1}.
\end{align*}
In view of \eqref{eqn_separate_system_3}, the evolution of the acceleration of the system \eqref{eqn_separate_system} during one sampling period can be given by
\begin{align*}
    \underline{\mu_d}&=e^{-(t-t_k)/\gamma}\underline{\mu_d}(t_k)+e^{-(t-t_k)/\gamma}\int_{t_k}^t e^{-(\tau-t_k)/\gamma}\alpha\underline{\eta}(\tau)d\tau.
\end{align*}
Applying the absolute value to both sides yields that
\begin{align*}
    \vert\underline{\mu_d}\vert&\leq e^{-(t-t_k)/\gamma}\vert\underline{\mu_d}(t_k)\vert\\
    &\quad\quad+e^{-(t-t_k)/\gamma}\int_{t_k}^t e^{-(\tau-t_k)/\gamma}\vert\alpha\underline{\eta}(\tau)\vert d\tau\\
    &\leq  e^{-(t-t_k)/\gamma}\vert \underline{\mu_d}(t_k)\vert+\alpha\left(1-e^{-(t-t_k)/\gamma}\right)\vert  \bar{\underline{\eta}}_k \vert
    \end{align*}
where $\vert\bar{\underline{\eta}}_k\vert=\sup_{t\in[t_k,t_{k+1})}\vert\underline{\eta}(t)\vert\leq \Delta_k$. Following a similar approach to the proof for jerk $\underline{\eta}(t)$, an analogous conclusion can be drawn for acceleration that $\vert\underline{\mu_d}(t)\vert\leq \Delta_k$ for all $t\in[t_k,t_{k+1})$ and $\vert\underline{\mu_d}(t_{k+1})\vert\leq \Delta_{k+1}$.

This completes the proof.

\subsection{Proof of Theorem 1}\label{apx_thm1}
From \cite{rawlings2020model}, to establish the stability, it suffices to prove that for any $x_k$, there exists a feasible local control law $\kappa(x_k)$ such that:
\begin{subequations}\label{eqn_stable_assum}
    \begin{align}
        \vert \kappa(x_k)\vert &\leq \Delta_k\\
        V_f(Ax_k+B\kappa(x_k))-V_f(x_k)&\leq -l(x_k,\kappa(x_k)).
    \end{align}    
\end{subequations}

In view of \eqref{eqn_coordinate_transformation}, \eqref{eqn_sub_coordinate_transformation}, \eqref{eqn_cost}, and \eqref{eqn_terminal} we have
\begin{align*}
    V_f(x_k)&=\Theta(\|\hat{x}_{1,k}\|_W^2 +\hat{x}_{2,k}^2)\\
    l(x_k,\kappa(x_k))&=\|\hat{x}_{1,k}\|^2+\Psi_\Delta(\hat{x}_{2,k})+\kappa(x_k)^2
\end{align*}
and 
\begin{align}
    &\frac{1}{\Theta}(V_f(Ax_k+B\kappa(x_k))-V_f(x_k))\notag\\
    = & \|\hat{A}_1 \hat{x}_{1,k}+\hat{B}_1\kappa(x_k)\|_W^2-\|\hat{x}_{1,k}\|_W^2\notag\\
    &+(\hat{x}_{2,k}+\hat{b}\kappa(x_k))^2-\hat{x}_{2,k}^2\notag\\
     =&\|\hat{x}_{1,k}\|_{\hat{A}_1\T W \hat{A}_1 -W}^2+ 2\hat{x}_{1,k}\T\hat{A}_1\T W\hat{B}_1\kappa(x_k)\notag\\
    &+2\hat{b}\hat{x}_{2,k}\kappa(x_k)+(\hat{b}^2+\|\hat{B}_1\|_W^2)\kappa(x_k)^2\notag\\
    \leq &-\|\hat{x}_{1,k}\|_{Q-\varepsilon\hat{A}_1\T W W\T\hat{A}_1}^2+2\hat{b}\hat{x}_{2,k}\kappa(x_k)\notag\\
    &+(\hat{b}^2+\|\hat{B}_1\|_W^2+\frac{1}{\varepsilon}\|\hat{B}_1\|^2)\kappa(x_k)^2\notag\\
    =&-\|\hat{x}_{1,k}\|_{Q-\varepsilon\hat{A}_1\T W W\T\hat{A}_1}^2+\left(2\hat{b}\hat{x}_{2,k}+\Gamma\kappa(x_k) \right) \kappa(x_k)\notag\\
    &-\vert \hat{b}\vert \kappa(x_k)^2\label{eqn_ter-ter}
\end{align}
where the fourth line follows Young's inequality with $\varepsilon$ chosen as \eqref{eqn_MPC_para}. Choosing local control law $\kappa(x_k)$ as follows
\begin{align}\label{eqn_localcontrollaw}
    \kappa(x_k)=-\sigma_\Delta \left(\frac{\hat{b}\hat{x}_{2,k}}{\Gamma}\right)
\end{align}
and applying the first inequality of Lemma \ref{lem_1}, the \eqref{eqn_ter-ter} can be further developed
\begin{align*}
    &\frac{1}{\Theta}(V_f(Ax_k+B\kappa(x_k))-V_f(x_k))\notag\\
    \leq&-\frac{1}{2}\lambda_{\min}(Q)\|\hat{x}_{1,k}\|^2-\vert \hat{b}\vert \kappa(\hat{x}_k)^2-\hat{b}\hat{x}_{2,k}\sigma_\Delta\left(\frac{\hat{b}\hat{x}_{2,k}}{\Gamma}\right)\notag\\
    =&-\frac{1}{2}\lambda_{\min}(Q)\|\hat{x}_{1,k}\|^2-\vert \hat{b}\vert \kappa(x_k)^2-\vert \hat{b}\vert \hat{x}_{2,k}\sigma_\Delta\left(\frac{\vert \hat{b}\vert \hat{x}_{2,k}}{\Gamma}\right)\notag\\
    \leq&-\frac{1}{2}\lambda_{\min}(Q)\|\hat{x}_{1,k}\|^2-\vert \hat{b}\vert \kappa(x_k)^2-\frac{\hat{b}^2}{\Gamma}\hat{x}_{2,k}\sigma_\Delta(\hat{x}_{2,k})\notag\\
    \leq&-\frac{1}{2}\lambda_{\min}(Q)\|\hat{x}_{1,k}\|^2-\vert \hat{b}\vert \kappa(x_k)^2-\frac{\hat{b}^2}{\Gamma}\Psi_\Delta(\hat{x}_{2,k})  \notag\\
    \leq&-\frac{1}{\Theta}\left(\|\hat{x}_{1,k}\|^2+\Psi_\Delta(\hat{x}_{2,k})+\kappa(x_k)^2\right)\notag\\
    =&-\frac{1}{\Theta}l(x_k,\kappa(x_k))
\end{align*}
where the second inequality follows from the second inequality in Lemma \ref{lem_1}, and the third inequality follows from the second inequality in Lemma \ref{lem_1}. Since the local control \eqref{eqn_localcontrollaw} is feasible for all $k\in\mathbb{N}$ by construction, the stable condition \eqref{eqn_stable_assum} holds uniformly with respect to the sampling index. Therefore, the origin $x_k=0$ of the discrete-time closed-loop system \eqref{eqn_discrete_system},\eqref{eqn_MPCfeedback} is UGAS. This completes the proof.

\subsection{Proof Proposition 1}\label{apx_prop1}
The MPC strategy yields the optimal control sequence
\begin{align*}
    \bu^*(x_k)=\{u^*_{0\vert k},u^*_{1\vert k},\cdots,u^*_{N-1\vert k}\}
\end{align*}
and the optimal cost
\begin{align*}
    J_N^*(x_k)=J(x_k,\bu^*(x)).
\end{align*}
Consider another control sequence
\begin{align*}
    \tilde{\bu}(x_k)=\{\kappa(x_{0\vert k}),\kappa(\tilde{x}_{1\vert k}),\cdots,\kappa(\tilde{x}_{N-1\vert k})\}
\end{align*}
and the corresponding state prediction sequence
\begin{align*}
    \tilde{\mathbf{x}}=\{\tilde{x}_{0\vert k},\tilde{x}_{1\vert k},\cdots,\tilde{x}_{N\vert k}\}
\end{align*}
generated by applying $\tilde{\bu}$ to the system dynamics. Choosing the control law $\kappa(\cdot)$ as \eqref{eqn_localcontrollaw}, then \eqref{eqn_stable_assum} is satisfied, which implies
\begin{align}\label{eqn_J*leq}
    J_N^*(x_k)&\leq J(x_k,\tilde{\bu}(x_k))\notag\\
    &=\sum_{i=0}^{N-1}l(\tilde{x}_{i\vert k},\kappa(\tilde{x}_{i\vert k}))+V_f(\tilde{x}_{N\vert k})\notag\\ 
    &=\sum_{i=0}^{N-2}l(\tilde{x}_{i\vert k},\kappa(\tilde{x}_{i\vert k}))\notag\\
    &~~~~+l(\tilde{x}_{N-1\vert k},\kappa(\tilde{x}_{N-1\vert k}))+V_f(\tilde{x}_{N\vert k})\notag\\
    &\leq \sum_{i=0}^{N-2}l(\tilde{x}_{i\vert k},\kappa(\tilde{x}_{i\vert k}))+V_f(\tilde{x}_{N-1\vert k})\notag\\
    &\leq V_f(x_k)\leq \Theta \lambda_{\max}(M)\|x_k\|^2.
\end{align}
On the other hand, we have
\begin{align}\label{eqn_J*geq}
    J_N^*(x_k)\geq l(x_k,u_{0\vert k}^*)\geq (u_{0\vert k}^*)^2=\underline{u}_k^2.
\end{align}
In view of \eqref{eqn_J*leq} and \eqref{eqn_J*geq}, one has
\begin{align*}
    \underline{u}_k^2\leq \Theta\lambda_{\max}(M)\| x_k\|^2.
\end{align*}
Therefore, there exists $c_3=\sqrt{\Theta\lambda_{\max}(M)}>0$, such that $|\underline{u}_k|\leq c_3\|x_k\|$. This completes the proof.

\bibliographystyle{IEEEtran}
\bibliography{MPCbib}
\end{document}